\documentclass[11pt, onecolumn]{article}
\usepackage{amsmath,amsthm,amssymb}
\usepackage{graphicx}
\usepackage{wrapfig}
\usepackage{caption}
\usepackage{subcaption,enumerate, color}
\usepackage[margin=1in]{geometry}
\usepackage{float}
\usepackage{multirow}
\usepackage[dvips]{epsfig}
\usepackage{lingmacros}
\usepackage{amsfonts}
\usepackage{bbm}
\usepackage{tikz}
\usepackage[utf8]{inputenc}
\usepackage[T1]{fontenc}

\usepackage{pgfplots}
\pgfplotsset{compat=1.12}

\usepackage{authblk}

\usetikzlibrary{graphs}
\usetikzlibrary{graphs.standard}
\usepackage{tkz-berge}
\usetikzlibrary{decorations.markings}
\usepackage{multicol, pgfplots}

\usepackage{tikz}
\usetikzlibrary{arrows,trees}
\usetikzlibrary{shapes,backgrounds,calc,intersections,patterns}


\theoremstyle{definition}
\newtheorem{definition}{Definition}[section]

\newtheorem{theorem}[definition]{Theorem}
\newtheorem{lemma}[definition]{Lemma}

\newtheorem{cor}[definition]{Corollary}

\newcommand{\D}{\mathcal{D}}

\newcommand{\R}{{\mathbb R}}


\newcommand{\rmv}[1]{}

\newcommand{\pP}[1]{\mbox{Prob}\left[ #1 \right] }
\newcommand{\pPP}[2]{\mbox{Prob}_{#2} \left[ #1 \right] }

\newcommand{\VV}[2]{\text{Vol}_{#2}\left( #1 \right)}

\definecolor{OliveGreen}{rgb}{0.0, 0.6, 0.0}

\makeatletter
\g@addto@macro\bfseries{\boldmath}
\makeatother

\DeclareMathOperator{\Jac}{Jac}

\usepackage{todonotes}

\title{Optimal Bounds for Johnson-Lindenstrauss Transformations
}

\newcommand*\samethanks[1][\value{footnote}]{\footnotemark[#1]}

\author[1]{Michael Burr\thanks{Partially supported by a grant from the Simons Foundation (\#282399 to Michael Burr) and National Science Foundation Grant CCF-1527193.}}
\author[1]{Shuhong Gao\thanks{Partially supported by the National Science Foundation  under Grants CCF-1407623, DMS-1403062, and  DMS-1547399.}}
\author[2]{Fiona Knoll\samethanks\thanks{Most of the work was done while the author was at Clemson University.}}
\affil[1]{Department of Mathematical Sciences\\Clemson University, Clemson,  SC 29634}
\affil[2]{Department of Mathematical Sciences\\University of Cincinnati, Cincinnati, OH 45221}

\begin{document}

\maketitle

\begin{abstract}
In 1984, Johnson and Lindenstrauss proved that any finite set of data in a high-dimensional space can be projected to a lower-dimensional space while preserving the pairwise Euclidean distance between points up to a bounded relative error.
If the desired dimension of the image is too small, however,  
Kane, Meka, and Nelson (2011) and Jayram and Woodruff (2013) independently proved that such a projection does not exist.
In this paper, we provide a precise asymptotic threshold for the dimension of the image, above which, there exists a projection preserving the Euclidean distance, but, below which, there does not exist such a projection.
\end{abstract}

{\bf Keywords}  Johnson-Lindenstrauss transformation, Dimension reduction, Phase transition, Uniform measure of spheres, Asymptotic threshold


\section{Introduction}

In 1984, Johnson and Lindenstrauss \cite{JohnsonLindenstrauss}, in establishing  a bound on the Lipschitz constant for the Lipschitz extension problem, proved that 
any finite set of data in a high-dimensional space can be projected into a lower-dimensional space while preserving the pairwise Euclidean distance within any desired relative error.
In particular,  for any finite set of vectors $x_1, \ldots, x_N \in \R^d$ and for any error factor $0 < \epsilon < \frac{1}{2}$, 
there exists an absolute  constant $c$ such that for all $k \geq c \epsilon^{-2} \log N$, 
there exists a linear map $A: \R^d \rightarrow \R^k$  such that for all pairs $1\leq i,j \leq N$,
		\[(1-\epsilon) \|x_i-x_j\|_2 \leq \|Ax_i - Ax_j\|_2 \leq (1+\epsilon) \|x_i-x_j\|_2, \]
where $\|\cdot \|_2$ denotes the Euclidean norm.  These inequalities are implied by the following theorem
(by setting $\delta = \frac{1}{N^2}$ and taking the union bound):
\begin{theorem}[Johnson and Lindenstrauss \cite{JohnsonLindenstrauss}]\label{Thm: Main JL 2}
For any real numbers  $0< \epsilon, \delta <\frac{1}{2}$, there exists an absolute constant $c>0$ such that for  
any integer  $k \geq c \epsilon^{-2} \log \frac{1}{\delta}$, there exists a 
probability distribution $\D$ on $k \times d$ real matrices such that for any fixed $x \in \R^d$,
				\begin{equation} \label{Ineq: JL}
					\pPP{ (1-\epsilon)\|x\|_2^2 \leq \|Ax\|_2^2   \leq (1+\epsilon) \|x\|_2^2}{A \sim \D} > 1-\delta, 
				\end{equation}	
where $A\sim \D$ means that the  matrix $A$ is a random matrix with distribution $\D$.
\end{theorem}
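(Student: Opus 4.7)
The plan is to let $\D$ be the distribution of $A = \frac{1}{\sqrt{k}}\, G$, where $G$ is a $k \times d$ matrix whose entries are independent standard normal random variables. By the homogeneity of \eqref{Ineq: JL} under scaling of $x$, I may assume $\|x\|_2 = 1$ throughout.

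First I would identify the law of $\|Ax\|_2^2$. Because each row of $G$ is an independent standard Gaussian vector in $\R^d$, each inner product $(Gx)_i = \sum_j G_{ij} x_j$ is a centered Gaussian of variance $\|x\|_2^2 = 1$, and the $(Gx)_i$ are mutually independent. Hence
\[ \|Ax\|_2^2 \;=\; \frac{1}{k}\sum_{i=1}^{k} (Gx)_i^2 \;=\; \frac{1}{k}\,Y, \]
where $Y$ is chi-squared with $k$ degrees of freedom, so $E\|Ax\|_2^2 = 1$ and the theorem reduces to showing that $Y$ concentrates sharply around its mean $k$.

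Next I would apply a Chernoff-type tail bound to $Y$. Using the moment generating function $E[e^{tY}] = (1-2t)^{-k/2}$ for $t<\tfrac{1}{2}$, Markov's inequality, and optimizing in $t$, one obtains Laurent--Massart-style two-sided tails
\[ \pP{\,|Y - k| \geq k\epsilon\,} \;\leq\; 2\exp(-c'\, k\, \epsilon^2) \]
for $0<\epsilon<\tfrac{1}{2}$ and some absolute constant $c'>0$. Substituting $Y = k\|Ax\|_2^2$ shows that the complement of the event in \eqref{Ineq: JL} has probability at most $2\exp(-c' k\epsilon^2)$, so it suffices to choose $k \geq c\,\epsilon^{-2}\log(1/\delta)$ with $c$ a suitable multiple of $1/c'$ to make this less than $\delta$.

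The main obstacle is obtaining the $\epsilon^2$ (rather than $\epsilon$) exponent in the chi-squared tail bound: this quadratic dependence is precisely what yields the optimal $\epsilon^{-2}$ scaling of $k$, and it requires the Chernoff optimization that exploits the analyticity of the chi-squared moment generating function at the origin, rather than a cruder Markov or Chebyshev estimate.
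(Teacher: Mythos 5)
Your proof is correct and is the standard Gaussian-matrix argument for the Johnson--Lindenstrauss lemma. The paper itself does not prove Theorem~\ref{Thm: Main JL 2}: it is stated with a citation to \cite{JohnsonLindenstrauss}. However, when the paper establishes the corresponding existence bound (Theorem~\ref{Thm: exists} in Section~\ref{Sec: OrthogProj}), it uses a genuinely different construction: the Dasgupta--Gupta distribution, where $A = s_0^{-1/2}(v_1,\dots,v_k)^t$ is $s_0^{-1/2}$ times the top $k\times d$ block of a Haar-random $d\times d$ orthogonal matrix $V$, with $s_0 = k/d$. In that construction $\|Aw\|_2^2 = s_0^{-1}\sum_{i=1}^k x_i^2$ where $x = Vw$ is uniform on $S^{d-1}$, so the analysis hinges on the distribution of $s = x_1^2 + \cdots + x_k^2$ under the surface measure of $S^{d-1}$, which the paper decomposes via the map $\Psi$ and the density $Bf(s)$ of Theorem~\ref{Thm_ud}, then bounds explicitly in Section~\ref{Concentration}.

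The two routes are closely related but buy different things. Your Gaussian construction has the virtue that $k\|Ax\|_2^2$ is exactly $\chi^2_k$, whose moment generating function $(1-2t)^{-k/2}$ is in closed form, so the Chernoff optimization is a short elementary computation and yields the unspecified absolute constant $c$ that Theorem~\ref{Thm: Main JL 2} asks for. The paper's orthogonal-projection construction is chosen because the same random variable $s$ on $[0,1]$ drives \emph{both} the upper bound (existence of a JL distribution) and the lower bound (nonexistence below threshold, via Lemma~\ref{lemma_c} and Theorem~\ref{thm_lowbound}); this lets the authors match the two sides and pin down the sharp asymptotic constant $4$ in $k_0 \sim 4\epsilon^{-2}\log(1/\delta)$, which your proposal, as written, does not attempt. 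One small remark: for the statement you are proving it is cleaner to track the constant explicitly (e.g.\ via $\epsilon - \log(1+\epsilon) \geq \tfrac{\epsilon^2}{2} - \tfrac{\epsilon^3}{3}$ and $-\epsilon - \log(1-\epsilon) \geq \tfrac{\epsilon^2}{2}$) rather than invoke Laurent--Massart as a black box, since that recovers the Achlioptas-type bound $k \geq 2\log(2/\delta)\bigl(\tfrac{\epsilon^2}{2} - \tfrac{\epsilon^3}{3}\bigr)^{-1}$ the paper quotes; but this is a presentational choice, not a gap.
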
			
Note that, in order to  project  a large number of vectors, $\delta$ must be sufficiently small. For instance, suppose we wish to project a set of 
$N=2^{20}$ vectors to a smaller dimensional space.  To apply the union bound to Inequality (\ref{Ineq: JL}), we use $\delta=2^{-40}$.  In this case, Inequality (\ref{Ineq: JL})  implies that  the probability of preserving all pairwise distances between $N$ points
(up to a relative error of $\epsilon$) is at least  $1-\delta N^2/2 = 1/2$.  Since the probability is nonzero, such a projection exists.

A  probability distribution $\D$ satisfying  Inequality (\ref{Ineq: JL}) is called an    $(\epsilon,\delta)$-JL distribution,  or simply a JL distribution.  Since these transformations are linear, without loss of  generality,  
we assume for the rest of the paper that $\|x\|_2=1$.  When a JL-distribution is specified via an explicit construction, we may call a random projection $x\mapsto Ax$ generated in this way a JL transformation.
		
Since the introduction of JL distributions, there has been considerable work on explicit constructions of JL distributions, see, e.g.,
 \cite{JohnsonLindenstrauss, FranklMaehara, IndykMotwani, Achlioptas, AilonChazelle_ANN, Matousek, DasguptaKumarSarlos, KaneNelson_Sparser} and the references therein. 
 A simple and easily described  JL distribution is that of Achlioptas \cite{Achlioptas}. In this construction, the entries  of $A$ are distributed as follows:
 \[a_{ij} = \sqrt{\frac{3}{k}}\cdot \left\lbrace 
		 \begin{matrix} 1, & \text{ with probability } 1/6, \\ 
			               0, & \text{ with probability } 1/3, \\  
			            - 1, & \text{ with probability } 1/6.	
               \end{matrix} \right.
\]
The recent constructions in \cite{AilonChazelle_ANN, Matousek, DasguptaKumarSarlos, KaneNelson_Sparser} have focused on the complexity of computing the 
 projection for the purpose of applications. We note that the  ability to project a vector to a smaller dimensional space, independent of the original dimension, while preserving the Euclidean norm up to a prescribed relative error, is highly desirable.  In particular, dimension reduction has applications to many fields, including machine learning 
\cite{MachineLearning_Vempala, MachineLearning_Weinberger}, low rank approximation  \cite{LowRank_ClarksonWoodruff, LowRank_Nguyen, LowRank_Ubaru}, approximate nearest neighbors \cite{AilonChazelle_ANN, IndykMotwani}, data storage \cite{RIP_Candes, Streaming}, and document similarity \cite{DocSim_Bingham, DocSim_Lin}.

For both practical and theoretical purposes,  it is important to know the smallest possible dimension $k$ of a potential image space 
for any given  $\epsilon$ and $\delta$.  
Note that, for any $d_1 < d$, each $(\epsilon, \delta)$-JL distribution $\D$ on $\R^{k\times d}$ induces an  $(\epsilon, \delta)$-JL distribution $\D_1$ on $\R^{k\times d_1}$  in a natural way: 
the matrices  of $\D_1$ are  obtained from $\D$ by deleting the last $d-d_1$ columns, together with the induced probability distribution.   This construction is a JL distribution since $\R^{d_1}$ can be naturally embedded into $\R^d$ by extending a vector in $\R^{d_1}$ by $d-d_1$ zeros.
Hence, if there exists an $(\epsilon, \delta)$-JL distribution on $\R^{k\times d}$, then there is an $(\epsilon, \delta)$-JL distribution on $\R^{k\times d_1}$ for all $1 \leq d_1 \leq d$.  Similarly, if an $(\epsilon,\delta)$-JL distribution does not exist on $\mathbb{R}^{k\times d}$, then, for any $k_1<k$, then there cannot be an $(\epsilon,\delta)$-JL distribution on $\mathbb{R}^{k_1\times d}$.  In particular, since $\mathbb{R}^{k_1}$ can be naturally embedded into $\mathbb{R}^k$ by extending a vector in $\mathbb{R}^{k_1}$ by $k-k_1$ zeros, if an $(\epsilon,\delta)$-JL distribution existed for $\mathbb{R}^{k_1\times d}$, it could be extended to an $(\epsilon,\delta)$-JL distribution existed for $\mathbb{R}^{k\times d}$.

For any $\epsilon$ and $\delta$, we define
\[k_0(\epsilon,\delta) = \min \{k:  \mbox{there exists an $(\epsilon, \delta)$-JL distribution on $\R^{k\times d}$ for every  $d\geq 1$} \}.   \]
By our definition,  $k_0= k_0(\epsilon,\delta)$ is independent of $d$, and, by  Theorem \ref{Thm: Main JL 2},  we have $k_0 \leq c \epsilon^{-2} \log(1/\delta)$ for some absolute constant $c>0$.  
 Frankl  and Maehara \cite{FranklMaehara}  show that $c \leq 9$. 
Achlioptas \cite{Achlioptas} further improves  this bound by providing a JL distribution with
			\[k> 2 \log(2/\delta) \left(\frac{\epsilon^2}{2} - \frac{\epsilon^3}{3} \right)^{-1} ,\]
resulting in the following upper bound:
\[k_0  \leq  2 \log(2/\delta) \left(\frac{\epsilon^2}{2} - \frac{\epsilon^3}{3} \right)^{-1} = 4\epsilon^{-2} \log(1/\delta) \left[ 1+ o(1) \right],\]
		where $o(1)$ approaches zero as both $\epsilon$ and $\delta$ approach zero.	
		
A lower bound on $k_0$ was not given until 2003 when Alon \cite{Alon} proved that 
					\[k_0 \geq c \epsilon^{-2} \log(1/\delta) \Big/ \log(1/\epsilon) \] 
for some absolute constant $c>0$.	Improving Alon's work, Jayram and Woodruff \cite{WoodruffJayram} and  Kane, Meka, and Nelson \cite{KaneNelsonMeka}  
showed, through different methods, that, for some absolute constant $c_1>0$,  there is no $(\epsilon,\delta)$-JL distribution for 
$k \leq c_1\epsilon^{-2} \log \frac{1}{\delta}$.  Hence,  there is a lower bound of the form
$k_0 \geq c_1  \epsilon^{-2} \log \frac{1}{\delta}$. This situation is summarized in Figure \ref{Figure:Line}.
		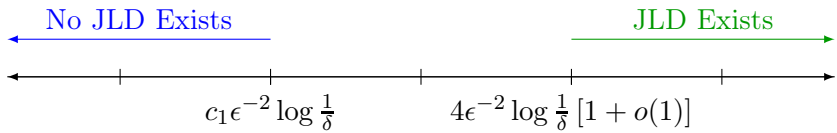
\begin{figure}[H]
		\begin{center}
		\begin{tikzpicture}
		\draw[latex-] (-5.5,0) -- (5.5,0) ;
		\draw[-latex] (-5.5,0) -- (5.5,0) ;
		\foreach \x in  {-4,-2,0,2,4}
		\draw[shift={(\x,0)},color=black] (0pt,3pt) -- (0pt,-3pt);
		
		\draw[shift={(-2,0)},color=black] (0pt,0pt) -- (0pt,-3pt) node[below] 
		{$c_1 \epsilon^{-2} \log \frac{1}{\delta}$};
		\draw[shift={(2,0)},color=black] (0pt,0pt) -- (0pt,-3pt) node[below] 
		{$4 \epsilon^{-2} \log \frac{1}{\delta}\left[ 1+ o(1) \right]$};
		
		\draw[-latex,color=OliveGreen] (2,.5)--(5.5,.5); 
		\draw[color = OliveGreen] (3.75, .5) node[above] {JLD Exists};
		
		\draw[latex-,color=blue] (-5.5,.5)--(-2,.5); 
		\draw[color = blue] (-3.75, .5) node[above] {No JLD Exists};
		
		
		\end{tikzpicture}	
		\caption{For fixed $\epsilon$ and $\delta$, there exists a JL distribution for $k \geq 4 \epsilon^{-2} \log(1/\delta)\left[ 1+ o(1) 
		\right]$. For $k<c_1 \epsilon^{-2} \log(1/\delta)$, for some absolute  constant $c_1>0$,  there is no JL distribution.  In this paper, we close this gap in the limit.
		\label{Figure:Line}}
		\end{center}
\end{figure}	
		
The goal of the current  paper is to close the gap between the upper and lower bounds in the limit.  In particular, we prove an optimal lower bound that  asymptotically matches the known upper bound when $\epsilon$ and $\delta$ approach $0$, see 
Theorem \ref{Thm:IntroThm}.   This means that $4 \epsilon^{-2} \log(1/\delta)$ is an asymptotic threshold for $k_0$ where  a phase change phenomenon occurs. 
\begin{theorem} \label{Thm:IntroThm}
For $\epsilon$ and $\delta$ sufficiently small, $k_0 \approx 4\epsilon^{-2} \log(1/\delta)$.  More precisely,
\[ \lim_{\epsilon, \delta \rightarrow 0} \  \frac{k_0(\epsilon, \delta)}{4\epsilon^{-2} \log(1/\delta)} =  1.\]
\end{theorem}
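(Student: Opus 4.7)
The plan is to establish the matching lower bound $k_0(\epsilon, \delta) \geq 4\epsilon^{-2}\log(1/\delta)(1 - o(1))$; paired with Achlioptas's upper bound quoted above, this immediately yields the asymptotic equality. Throughout, fix an arbitrary $(\epsilon, \delta)$-JL distribution $\D$ on $\R^{k \times d}$ (we are free to take $d$ as large as we please, since by definition $k_0$ requires $\D$ to work for every $d$), and aim to extract the lower bound on $k$.

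The first step is a standard averaging argument: apply the defining JL inequality to a vector $x$ drawn uniformly from the sphere $S^{d-1}$ and exchange the order of integration, producing a deterministic matrix $A \in \R^{k \times d}$ with
$$\Pr_{x \sim S^{d-1}}\bigl[\,\bigl|\|Ax\|_2^2 - 1\bigr| \leq \epsilon\,\bigr] \;\geq\; 1 - \delta.$$
Writing the singular value decomposition $A = U\Sigma V^T$ and invoking rotational invariance of the uniform measure on $S^{d-1}$, we have $\|Ax\|_2^2 \stackrel{d}{=} \sum_{i=1}^k \lambda_i y_i^2$, where $\lambda_i = \sigma_i^2$ are the squared singular values and $y$ is uniform on $S^{d-1}$. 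The problem reduces to upper-bounding this probability uniformly over all nonnegative $(\lambda_1, \ldots, \lambda_k)$.

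The heart of the argument is a symmetrization step: I claim that over all choices of eigenvalues, the concentration probability $\Pr_y[|\sum_i \lambda_i y_i^2 - 1| \leq \epsilon]$ is \emph{maximized} when $\lambda_1 = \cdots = \lambda_k = d/k$. Heuristically, unequal weights produce a heavier-tailed quadratic form, so the adversary's best $A$ is a scaled partial isometry. To carry this out rigorously, I would exploit the fact that $(y_1^2, \ldots, y_d^2)$ has a symmetric Dirichlet density and show, for any pair of coordinates, that replacing $(\lambda_i, \lambda_j)$ by $((\lambda_i + \lambda_j)/2, (\lambda_i + \lambda_j)/2)$ can only increase the probability of the target interval; iterating yields the claim (a Schur-concavity statement). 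Granting this, $\|Ax\|_2^2 \stackrel{d}{=} (d/k) Z_k$ with $Z_k \sim \mathrm{Beta}(k/2, (d-k)/2)$, and using the representation $y = g/\|g\|$ for $g \sim N(0, I_d)$ together with the concentration of $\|g\|^2/d$ about $1$, the relevant variable converges (as $d \to \infty$) in distribution to $\chi_k^2/k$; the condition becomes $\Pr[|\chi_k^2/k - 1| \leq \epsilon] \geq 1 - \delta - o(1)$.

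The final step applies the sharp large-deviation estimate for $\chi_k^2$. Cram\'er's theorem, or a direct saddle-point analysis of the explicit density, yields
$$\Pr\!\bigl[\chi_k^2 \geq k(1+\epsilon)\bigr] \;\geq\; \exp\!\Bigl(-\tfrac{k}{2}\bigl(\epsilon - \log(1+\epsilon)\bigr)(1+o(1))\Bigr) \;=\; \exp\!\Bigl(-\tfrac{k\epsilon^2}{4}(1+o(1))\Bigr)$$
as $k \to \infty$ and $\epsilon \to 0$. Combining this matching lower bound on the tail with the inequality of the previous paragraph forces $\delta \geq \exp(-k\epsilon^2 / 4 \cdot (1+o(1)))$, which rearranges to $k \geq 4\epsilon^{-2}\log(1/\delta)(1-o(1))$. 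The main obstacle is unquestionably the Schur-concavity comparison: proving that the equal-singular-value configuration truly maximizes the concentration probability, and doing so with sufficient uniformity in $d$ to pass to the chi-square limit, is where the promised analysis of the uniform measure on spheres must do serious work. Everything else is either a textbook reduction or a quantitative version of Cram\'er's theorem for the chi-square distribution.
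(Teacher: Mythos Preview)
Your overall plan---average over $x\sim S^{d-1}$ to extract a single deterministic $A$, pass to the SVD, and then lower-bound the tail of the resulting quadratic form $\sum_i\lambda_i y_i^2$---matches the paper's strategy exactly. The difference is precisely at the step you flag as ``the main obstacle'': you propose to prove that the concentration probability is maximized at equal singular values via a Schur-concavity argument on the Dirichlet-distributed vector $(y_1^2,\dots,y_d^2)$, and then reduce to a chi-square tail. The paper bypasses this comparison entirely. It writes $\sum_{i=1}^k\lambda_i y_i^2 = s\cdot c$ where $s=\sum_{i=1}^k y_i^2$ and $c=\sum_{i=1}^k\lambda_i y_i^2/s=\|\Sigma u\|_2^2$ depends only on the \emph{direction} $u=(y_1,\dots,y_k)/\sqrt{s}\in S^{k-1}$. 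The key observation (Theorem~\ref{Thm_ud}) is that under the uniform measure on $S^{d-1}$, the variables $s$ and $u$ are \emph{independent}, with $s\sim\mathrm{Beta}(k/2,(d-k)/2)$. Conditioning on $u$ fixes $c>0$, and then the event $|sc-1|>\epsilon$ contains either $\{s>s_0(1+\epsilon)\}$ or $\{s<s_0(1-\epsilon)\}$ depending on whether $c\ge 1/s_0$ or $c<1/s_0$ (Lemma~\ref{lemma_c}). Hence $\Pr[|sc-1|>\epsilon]\ge\min\{\Pr[s>s_0(1+\epsilon)],\Pr[s<s_0(1-\epsilon)]\}$ \emph{uniformly} in the singular values, and one only needs the two-sided Beta tail lower bounds (Section~\ref{Concentration}), which give exactly the $e^{-k\epsilon^2/4(1+o(1))}$ rate you quote for $\chi_k^2$. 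This independence-and-conditioning trick is both shorter and more robust than the symmetrization route: it requires no optimality statement about equal weights, no uniformity-in-$d$ issues when passing to a limit, and no convexity analysis on the simplex. Your Schur-concavity claim may well be true, but it is genuinely unnecessary here.
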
				

The rest of the paper is organized as follows: To prove Theorem \ref{Thm:IntroThm}, we follow the approach of  Kane, Meka and  Nelson \cite{KaneNelsonMeka}. 
To make their constant $c_1$ explicit, however,  we  must use a more careful argument.  
In Section \ref{Sec:Main:New}, we provide explicit conditions under which we prove the main result, Theorem \ref{Thm:IntroThm}.  
We delay the proofs of the explicit conditions until Sections \ref{Sec:UniformMeasure} and \ref{Concentration} in order to make the main result more accessible since only the statements of these results (which are of independent interest) are needed, and not their more technical proofs.
In Section \ref{Sec:UniformMeasure},   we study uniform distributions  and surface areas (or hypervolumes) on high-dimensional spheres. 
More precisely,  for any $d\geq 1$, let $S^{d-1}$ denote the unit sphere of dimension $d-1$, i.e., 
$S^0 = \{1, -1\}$ has two points, $S^1$ is the unit circle, $S^2$ is the unit sphere in $\R^3$, and, in general,
\[ S^{d-1} = \left\lbrace x \in \R^{d}: \sum_{i=1}^d x_i^2 = 1 \right\rbrace,\]
and $d\Omega_{d-1}$ be the surface area measure for $S^{d-1}$. We show that, for any $1\leq k \leq d$, 
  \[ d\Omega_{d-1} = \frac{1}{2} f(s)ds\, d\Omega_{k-1} d\Omega_{d-k-1},\]
where $s \in [0,1]$ and   $f(s) =  s^{\frac{k-2}{2}}(1-s)^{\frac{d-k-2}{2}}$.  
This is a more precise version of a result in \cite{KaneNelsonMeka}, replacing an unspecified constant by $1/2$.
This formula  is of independent interest since it shows that the uniform distribution on $S^{d-1}$ is a product of uniform distributions on $S^{k-1}$ and $S^{d-k-1}$ with a distribution on $[0,1]$, see Theorem \ref{Thm_ud}.  In Section \ref{Concentration}, we prove probabilistic bounds  on $s =x_1^2 + \cdots + x_k^2$ where
 $x=(x_1, \cdots, x_k, \cdots, x_d)$ is a random variable uniformly distributed on $S^{d-1}$.  These bounds can be viewed as explicit bounds for concentration theorems for laws of large numbers in probability theory.  
  
\section{Asymptotic Threshold Bound} \label{Sec:Main:New}

In this section, we prove the asymptotic threshold bound for JL transformations.  In particular, we provide specific conditions that result in the asymptotic threshold bound of $4\epsilon^{-2}\log(1/\delta)$.  In Sections \ref{Sec:UniformMeasure} and \ref{Concentration}, we prove that these specific conditions hold, but the details of these proofs are more technical, and only the statements are needed for the asymptotic bound. 

\subsection{The Uniform Distribution on $S^{d-1}$}

There is a unique probability distribution, called the uniform distribution, on $S^{d-1}$ that is invariant under the orthonormal group.
From a sampling point of view, a uniform random point on $S^{d-1}$ can be obtained as follows:
Let $x_1, x_2, \ldots, x_d$ be independent random variables on $\R$ distributed according to the Gaussian distribution  $N(0,1)$ (i.e., the standard normal distribution with mean $0$ and variance $1$),  and let $X=(x_1,x_2, \ldots, x_d)^t$. 
Then,  $x= \frac{X}{\|X\|_2}$ is a random point uniformly distributed on $S^{d-1}$.    

The uniform distribution may also be defined in terms of the surface area as follows: 
Let $\VV{S^{d-1}}{d-1}$ denote the $(d-1)$-dimensional surface area (or hypervolume) of $S^{d-1}$, and, similarly, let $\VV{V}{d-1}$ denote the surface area of $V$ for any (measurable) subset $V$ of $S^{d-1}$.   For example,
\[\VV{S^0}{0} =2, \quad  \VV{S^1}{1} = 2 \pi, \quad\text{and}\quad \VV{S^{d-1}}{d-1} =  \frac{2 \pi^{d/2}}{\Gamma(d/2)}, \]
where  $\Gamma(z)$ denotes the Gamma function 
\[\Gamma(z) = \int_0^\infty x^{z-1}e^{-x}dx. \]
The probability that a random point $x$ from $S^{d-1}$ drawn from the uniform distribution is in $V$ equals $\VV{V}{d-1}/\VV{S^{d-1}}{d-1}$,
hence the probability is invariant under orthonormal transformations.

 We express the uniform distribution on $S^{d-1}$  in term of the surface area differential form\footnote{In this paper, we suppress the pullback maps on equalities for differential forms since there is a unique (almost) bijective map under consideration in each case.  We leave the details to the interested reader.} 
 $d \Omega_{d-1}$, which means that, for any measurable subset 
$V \subset S^{d-1}$,  the $(d-1)$-dimensional surface area of $V$ is equal to the integral with respect to $d\Omega_{d-1}$, i.e., $\VV{V}{d-1}=\int_V  d \Omega_{d-1}$.  For example,  $d \Omega_{0} = \delta_{1}+\delta_{-1}$ consists of two point measures and 
$d \Omega_{1} =  \frac{dx_1}{x_2} = - \frac{dx_2}{x_1}$
at any point $(x_1,x_2)^t \in S^1$.     Thus, the uniform distribution on $S^{d-1}$  is defined in terms of $d \Omega_{d-1}/\VV{S^{d-1}}{d-1}$, i.e., 
 for any measurable subset  $V \subset S^{d-1}$, 
 \[ \pPP{x \in V}{x \sim S^{d-1}} = \frac{1}{\VV{S^{d-1}}{d-1}}\int_V d \Omega_{d-1}=\frac{\VV{V}{d-1}}{\VV{S^{d-1}}{d-1}}\]
 where $x \sim S^{d-1}$ means that $x$ is a random variable uniformly distributed on $S^{d-1}$. 

As we are interested in reducing a $d$-dimensional vector to a $k$-dimensional vector for $1 \leq  k < d$,  we derive a relationship between the  uniform distribution  on 
 $S^{d-1}$ and the uniform distributions on $S^{k-1}$ and $S^{d-k-1}$.  Following the approach of  Kane, Meka and Nelson \cite{KaneNelsonMeka}, for $1\leq k<d$, we define an injective map 
\[ \Psi:   S^{d-1}   \rightarrow    [0,1] \times S^{k-1} \times S^{d-k-1} \]
 as follows:  
 For any  $x= (x_1,x_2, \ldots, x_d)^t \in S^{d-1}$, we define $s$ in $\Psi(x) = (s, u , v)$ as $s=x_1^2+\dots+x_k^2$.  In the case where $0<s<1$, we define 
  \[  u =  (x_1,  \ldots, x_k)^t/\sqrt{s} \quad\text{and}\quad   v = (x_{k+1}, \ldots, x_d)^t/\sqrt{1-s}.\]
When $s=0$, i.e., $x_1=\dots=x_k=0$, we define $u=(1,0, \ldots, 0)^t$  (or any point in $S^{k-1}$) and $v = (x_{k+1}, \ldots, x_d)^t$.  
 Similarly, for $s=1$, we define $u=(x_1, \ldots, x_k)^t$ and $v=(1,0, \ldots, 0)^t$ (or any point in $S^{d-k-1}$). 
 It is straight-forward to check that $\Psi$ is injective.  In addition, the complement of the image of $\Psi$ is a subset of $\{0,1\} \times S^{k-1} \times S^{d-k-1}$ which has $(d-1)$-dimensional surface area $0$.  Therefore, when necessary, we assume that $s\in(0,1)$.
 
 For $s \in [0,1]$, we define
 \[ f(s) =  s^{\frac{k-2}{2}}(1-s)^{\frac{d-k-2}{2}}.\]
 In Theorem \ref{Thm_ud}, we prove that, via the map $\Psi$, 
  \[ d\Omega_{d-1} = \frac{1}{2} f(s)ds\, d\Omega_{k-1} d\Omega_{d-k-1}.\]
  Equivalently,  in term of probability distributions,
\begin{equation}\label{eq:equalitymeasures}
  \frac{d\Omega_{d-1}}{\VV{S^{d-1}}{d-1}} =  B f(s) ds \frac{d\Omega_{k-1}}{\VV{S^{k-1}}{k-1}} \frac{d\Omega_{d-k-1}}{\VV{S^{d-k-1}}{d-k-1}},
  \end{equation}
 where $B$ is an appropriate scaling constant depending on $d$ and $k$, for more details, see Equation~(\ref{Eqn:B}).  Moreover, in this situation, $Bf(s)$ is a probability distribution on $[0,1]$.  This implies that the uniform distribution on $S^{d-1}$ is a direct product of the distributions on the factors.  In other words, a uniformly distributed random variable $X_{d-1}$ on $S^{d-1}$ can be decomposed into three random variables $\Psi(X_{d-1})=(S,X_{k-1},X_{d-k-1})$ with the following properties:
\begin{enumerate}[(i)]
\item $S$ is a random variable on $[0,1]$ with density function $Bf(s)$,
\item $X_{k-1}$ and $X_{d-k-1}$ are uniformly distributed on $S^{k-1}$ and $S^{d-k-1}$, and
\item The random variables $S$, $X_{k-1}$, and $X_{d-k-1}$ are {\em independent}.
\end{enumerate}
The independence of these three random variables is a key property in our proof as it allows us to study the three spaces independently.

\subsection{Upper Bound: Explicit JL Distribution} \label{Sec: OrthogProj}

We recall that Achlioptas \cite{Achlioptas} proved that 
\[k_0(\epsilon, \delta)   \leq  2 \log(2/\delta) \left(\frac{\epsilon^2}{2} - \frac{\epsilon^3}{3} \right)^{-1} = 4\epsilon^{-2} \log(1/\delta) \left[ 1+ o(1) \right].\]
In this section, we give an alternate proof of this result using the approach and bounds from this paper.

We recall the following construction  by Gupta and  Dasgupta \cite{DasguptaGupta}:   A distribution $\D$ on $k \times d$ matrices is formed by picking a $d\times d$ orthonormal  matrix 
$V = (v_1, \ldots, v_d)^t$ uniformly at random with respect to the Haar measure on orthonormal matrices, 
and then letting $A = \frac{1}{\sqrt{s_0}}(v_1, \ldots, v_k)^t$ where $s_0= k/d$.  From a sampling perspective, $A$ can be constructed by drawing $v_1$ from a uniform distribution on $S^{d-1}$, and then drawing each $v_i$ from a uniform distribution on the $(d-i)$-dimensional sphere perpendicular to $v_1$, $\dots$, $v_{i-1}$.  The following theorem shows that $k_0(\epsilon, \delta)   \leq  4\epsilon^{-2} \log(1/\delta) \left[ 1+ o(1)\right]$, which, in turn, implies that
the limit appearing in Theorem \ref{Thm:IntroThm} (if it exists) is at most $1$:
\begin{theorem} \label{Thm: exists}
Let $0<\epsilon,\delta<\frac{1}{2}$ and $s_0=k/d$.  Suppose that there is some constant $C$ so that
$$
\max\{\pPP{ s < s_0(1-\epsilon)}{x\sim S^{d-1}},\pPP{ s> s_0(1+\epsilon)}{x\sim S^{d-1}} \}\leq Ce^{-\frac{k-2}{4}\epsilon^2\left(1-\frac{2}{3}\epsilon\right)},
$$
where $s$ is defined as in $\Psi(x)=(s,u,v)$.
Then, there exists an $o(1)$ function, which approaches zero as both $\epsilon$ and $\delta$ approach zero so that if $k> 4\epsilon^{-2} \log \left(\frac{1}{\delta} \right) \left[1+o(1)\right]$, then the distribution on $k\times d$ random matrices defined as above is 
an $(\epsilon, \delta)$-JL distribution, 
that is,   for any $w \in S^{d-1}$, 
\[\pPP{\left|\|Aw\|^2_2 -1\right| < \epsilon}{A \sim \D}  \geq 1- \delta.\]
\end{theorem}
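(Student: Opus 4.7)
The plan is to exploit the orthogonal invariance of the Haar measure to identify $\|Aw\|_2^2$ with the random variable $s/s_0$, and then read the JL bound directly off the two tail estimates in the hypothesis.

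First, I would observe that since $V=(v_1,\ldots,v_d)^t$ is drawn from Haar measure on $O(d)$, the vector $x:=Vw$ is uniformly distributed on $S^{d-1}$ for every fixed $w\in S^{d-1}$. Since the $i$-th entry of $Vw$ equals $v_i\cdot w$, I get
$$Aw=\frac{1}{\sqrt{s_0}}(v_1\cdot w,\ldots,v_k\cdot w)^t=\frac{1}{\sqrt{s_0}}(x_1,\ldots,x_k)^t,$$
and consequently $\|Aw\|_2^2=(x_1^2+\cdots+x_k^2)/s_0=s/s_0$, where $s$ is precisely the first coordinate of $\Psi(x)$ used in the hypothesis. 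Therefore the failure event $\left|\|Aw\|_2^2-1\right|\ge\epsilon$ coincides with $\{s\le s_0(1-\epsilon)\}\cup\{s\ge s_0(1+\epsilon)\}$.

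Next, I would apply a union bound to the two hypothesized tail estimates to obtain
$$\pPP{\bigl|\|Aw\|_2^2-1\bigr|\ge\epsilon}{A\sim\D}\le 2C\,e^{-\frac{k-2}{4}\epsilon^2\left(1-\frac{2}{3}\epsilon\right)},$$
and then force this upper bound to be at most $\delta$. Taking logarithms and solving yields the sufficient condition
$$k\ge 2+\frac{4\bigl[\log(1/\delta)+\log(2C)\bigr]}{\epsilon^2\left(1-\frac{2}{3}\epsilon\right)}.$$

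Finally, I would rewrite this threshold as $4\epsilon^{-2}\log(1/\delta)[1+o(1)]$. Dividing through,
$$\frac{1}{4\epsilon^{-2}\log(1/\delta)}\left(2+\frac{4[\log(1/\delta)+\log(2C)]}{\epsilon^2(1-\frac{2}{3}\epsilon)}\right)=\frac{\epsilon^2}{2\log(1/\delta)}+\frac{1+\frac{\log(2C)}{\log(1/\delta)}}{1-\frac{2}{3}\epsilon},$$
and the right-hand side tends to $1$ as $\epsilon,\delta\to 0$. So the displayed expression has the form $1+o(1)$, and any $k>4\epsilon^{-2}\log(1/\delta)[1+o(1)]$ meets the requirement. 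There is no substantive obstacle in the section; the genuine technical work—establishing the sharp two-sided concentration for $s$ with the explicit constant $\tfrac{1}{4}$ in the exponent—is deferred to Sections \ref{Sec:UniformMeasure} and \ref{Concentration} and appears here only as the hypothesis, so once those tail bounds are in hand, the argument above is essentially bookkeeping.
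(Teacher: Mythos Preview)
Your proposal is correct and follows essentially the same approach as the paper: use orthogonal invariance of Haar measure to identify $\|Aw\|_2^2$ with $s/s_0$ for $x=Vw$ uniform on $S^{d-1}$, apply the two tail bounds via a union bound to get $2Ce^{-\frac{k-2}{4}\epsilon^2(1-\frac{2}{3}\epsilon)}\le\delta$, and then solve for $k$ to exhibit the explicit $o(1)$ correction. The paper's proof differs only in cosmetic details of how the $o(1)$ term is displayed.
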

\begin{proof}
Let $V$ be the random orthogonal matrix as defined above, and let $x =(x_1, \ldots, x_d)^t = Vw$. Then $Aw = \sqrt{s_0^{-1}} (x_1, \ldots, x_k)^t$, and
\[ ||Aw||_2^2 = \frac{1}{s_0} (x_1^2 + \cdots+ x_k^2).\]
Since $V$ is orthonormal and $||w||_2 =1$, we have $||x||_2 =1$, hence $x \in S^{d-1}$.  We observe that since $V$ is a random orthogonal matrix, for fixed $w\in S^{d-1}$, $x=Vw$ is a random variable, uniformly distributed on $S^{d-1}$.
Hence, 
\[\pPP{\left| \|Aw\|^2_2-1 \right|> \epsilon}{A\sim\D} = \pPP{ \, \vline \frac{1}{s_0}\sum_{i=1}^k x_i^2 -1 \vline > \epsilon}{x\sim S^{d-1}},\]
where $x\sim S^{d-1}$ means that $x$ is a random variable uniformly distributed on $S^{d-1}$.  Let $s= \sum_{i=1}^k x_i^2$. Then, $s \in [0,1]$ and the probability above becomes
\begin{equation}\label{eq:exteriorbounds}
\pPP{ s < s_0(1-\epsilon)}{x\sim S^{d-1}}
				+ \pPP{ s> s_0(1+\epsilon)}{x\sim S^{d-1}}\leq 2Ce^{-\frac{k-2}{4}\epsilon^2\left(1-\frac{2}{3}\epsilon\right)},\end{equation}
by assumption.  We observe that when 
\begin{equation}\label{eq:form:o1}k> 4\epsilon^{-2} \log \left(\frac{1}{\delta} \right)\left[1+ \frac{2\epsilon}{3-2\epsilon} + \frac{\log(2C)}{\log\left( \frac{1}{\delta}\right)} \cdot \frac{1}
			{1-2\epsilon/3 }+ \frac{2\epsilon^2}{4\log \left(\frac{1}{\delta}\right)}\right]   = 4\epsilon^{-2} \log \left(\frac{1}{\delta} \right)\left[1+o(1)\right] ,\end{equation}
the right-hand-side of Inequality (\ref{eq:exteriorbounds}) is less than $\delta$.
In this case, the $o(1)$ term needed in the theorem statement appears in Inequality (\ref{eq:form:o1}).  Therefore, when $k>  4\epsilon^{-2} \log \left(\frac{1}{\delta} \right)\left[1+o(1)\right] $, 
		the distribution $\D$  is an $(\epsilon, \delta)$-JL distribution.	
\end{proof}

\subsection{Lower Bound for Arbitrary Distributions} 

In this section, we prove an optimal lower bound on the limit in Theorem \ref{Thm:IntroThm} that matches the upper bound from the previous section.  The proof of this lower bound is the main challenge in this paper.  We begin with the following key lemma:

\begin{lemma} \label{lemma_c}
Let $x =(x_1, \ldots, x_d)^t$ be a random variable, uniformly distributed on $S^{d-1}$, $\Psi(x)=(s,u,v)$, and $s_0=k/d$.
Suppose that
\[   \min\{ \pP{s>s_0(1+\epsilon)}, \pP{s<s_0(1-\epsilon)}\} \geq L,\]
where $s$ is a random variable with probability distribution $Bf(s)$ on $[0,1]$.
For any function  $c(u,v)>0$ depending  only on $u \in S^{k-1}$ and $v \in S^{d-k-1}$ (i.e., independent of $s$), we have
\[
\pPP{|s c -1| > \epsilon}{x\sim S^{d-1}}  \geq   L.
\]
\end{lemma}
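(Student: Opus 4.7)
The plan is to exploit the product structure of the uniform distribution on $S^{d-1}$ expressed via the map $\Psi$: because $s$, $u$, and $v$ are independent, with $s$ having density $Bf$ on $[0,1]$ and $(u,v)$ uniformly distributed on $S^{k-1}\times S^{d-k-1}$, I would condition on $(u,v)$. Under this conditioning, the quantity $c = c(u,v)$ becomes a fixed positive constant, while the conditional distribution of $s$ is still governed by $Bf$. The lemma then reduces to a one-variable claim: for every positive constant $c$,
\[
\pP{|sc - 1| > \epsilon} \geq L,
\]
where the probability on the left is computed with respect to $Bf(s)\,ds$ on $[0,1]$. Taking the expectation of this uniform lower bound against the $(u,v)$-marginal recovers the statement of the lemma.

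For the one-variable claim, first observe that $\{s \in [0,1] : |sc - 1| \leq \epsilon\}$ is exactly the interval $I_c = [(1-\epsilon)/c,\, (1+\epsilon)/c]$, whose endpoint ratio $(1+\epsilon)/(1-\epsilon)$ agrees with that of the ``allowed'' interval $[s_0(1-\epsilon),\, s_0(1+\epsilon)]$. I would then split on the sign of $c - 1/s_0$. If $c \geq 1/s_0$, then the right endpoint satisfies $(1+\epsilon)/c \leq s_0(1+\epsilon)$, so the tail $\{s > s_0(1+\epsilon)\}$ is contained in the complement of $I_c$, which gives $\pP{|sc-1|>\epsilon}\geq \pP{s>s_0(1+\epsilon)}\geq L$. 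If instead $c < 1/s_0$, then the left endpoint satisfies $(1-\epsilon)/c > s_0(1-\epsilon)$, so the tail $\{s < s_0(1-\epsilon)\}$ is contained in the complement of $I_c$, which gives $\pP{|sc-1|>\epsilon}\geq \pP{s<s_0(1-\epsilon)}\geq L$. The two cases exhaust all positive $c$, completing the one-variable claim.

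I do not anticipate a serious obstacle in executing this plan. The core argument is essentially a short geometric observation: any window of the form $I_c$ has the same multiplicative width as $[s_0(1-\epsilon), s_0(1+\epsilon)]$, so it cannot simultaneously clear both tails $\{s<s_0(1-\epsilon)\}$ and $\{s>s_0(1+\epsilon)\}$. The genuine technical work supporting the overall lower bound is not here, but rather in verifying the hypothesis that both tail probabilities of $s$ exceed $L$ with a sufficiently sharp constant. That concentration estimate, together with the decomposition formula for $d\Omega_{d-1}$, is established in Sections \ref{Sec:UniformMeasure} and \ref{Concentration}; the present lemma is the clean structural step that converts the concentration of $s$ into a lower bound against an arbitrary positive ``scaling'' function $c(u,v)$ that is independent of $s$.
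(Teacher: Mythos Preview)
Your proposal is correct and follows essentially the same approach as the paper: condition on $(u,v)$ via the product structure from Theorem~\ref{Thm_ud}, reduce to a one-variable tail bound for $s$, and split into two cases according to the size of $c$. Your case threshold $c\gtrless 1/s_0$ is in fact the correct one; the paper's printed threshold $c\gtrless s_0$ (and the accompanying $(1\pm\epsilon)/s_0$) is a typo for exactly the comparison you wrote.
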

\begin{proof}
By the equality of differential forms in Equation (\ref{eq:equalitymeasures}), 
\begin{eqnarray*}
\pP{|sc -1| > \epsilon}  &  =&  \int_{|s c -1| > \epsilon} Bf(s) ds  \frac{d\Omega_{k-1}}{\VV{S^{k-1}}{k-1}} \frac{d\Omega_{d-k-1}}{\VV{S^{d-k-1}}{d-k-1}}\\
				& = & \int_{S^{k-1}\times S^{d-k-1}} \left( \int_{|s c -1| > \epsilon} Bf(s) ds \right)  \frac{d\Omega_{k-1}}{\VV{S^{k-1}}{k-1}} \frac{d\Omega_{d-k-1}}{\VV{S^{d-k-1}}{d-k-1}}.
\end{eqnarray*}
Our goal is to find a lower bound on the integral $\int_{|s c -1| > \epsilon} Bf(s) ds$.  Due to the independence of $u$, $v$, and $s$, $c(u,v)$ is a fixed positive constant within this integral.  We observe that $|sc-1|>\epsilon$ consists of two intervals, $s<(1-\epsilon)/c$ and $s>(1+\epsilon)/c$ and consider two cases depending on the value of $c$.

We begin by recalling that
$$\pP{s>s_0 (1+\epsilon)}  =  \int_{s>s_0 (1+\epsilon)} B f(s)ds\quad\text{and}\quad\pP{s<s_0 (1-\epsilon)}   =   \int_{s<s_0(1-\epsilon)} Bf(s)ds.$$
If  $c \geq  s_0$, then   $(1+\epsilon)/c \leq  (1+\epsilon)/s_0$, and, hence
\[  \int_{|s c -1| > \epsilon} Bf(s) ds  \geq \int_{s > (1+ \epsilon)/c} Bf(s) ds \geq \int_{s > (1+ \epsilon)/s_0} Bf(s) ds \geq L.\]
On the other hand, if  $c < s_0$, then  $(1-\epsilon)/s_0 < (1-\epsilon)/c$, then
\[  \int_{|s c -1| > \epsilon} Bf(s) ds  \geq \int_{s < (1- \epsilon)/c} Bf(s) ds \geq \int_{s < (1- \epsilon)/s_0} Bf(s) ds \geq L.\]
Therefore, the integral $\int_{|s c -1| > \epsilon} Bf(s) ds$ is bounded from below by $L$, and 
\[  \pP{|sc -1| > \epsilon}     \geq   
\int_{S^{k-1}\times S^{d-k-1}} L    \frac{d\Omega_{k-1}}{\VV{S^{k-1}}{k-1}} \frac{d\Omega_{d-k-1}}{\VV{S^{d-k-1}}{d-k-1}}  =  L.\vspace{-.37in}\]
\end{proof}
\vspace{.12in}

We now show that when $k\leq\eta\epsilon^{-2}\log(1/\delta)$ with $\eta<4$, and $\epsilon$ and $\delta$ are sufficiently small, there does not exist an $(\epsilon, \delta)$-JL distribution on  $\R^{k\times d}$.
This fact, combined with the results in Section \ref{Sec: OrthogProj}, shows that the limit 
appearing in Theorem \ref{Thm:IntroThm} exists and equals $1$.  It is challenging to show this directly; instead, we consider the following related problem:
By definition, for a probability
distribution $\D$ on  $\R^{k\times d}$
to be an  $(\epsilon, \delta)$-JL distribution, the following inequality must hold for every $w \in S^{d-1}$:
\[	\pPP{|\|Aw\|_2^2 -1| > \epsilon }{A \sim \D} < \delta.\]
Hence,
\begin{equation}\label{eq4.2.1}
\pPP{|\|Aw\|_2^2 -1| > \epsilon }{A \sim \D, \  w \sim S^{d-1}} < \delta,
\end{equation}
where $w \in S^{d-1}$ is a random variable distributed uniformly on $S^{d-1}$.  Our approach is to prove that, for every $A \in \R^{k \times d}$, 
\begin{equation}\label{eq:oppositedir}
\pPP{|\|Aw\|_2^2 -1| > \epsilon }{w \sim S^{d-1}} > \delta.
\end{equation}
When Inequality (\ref{eq:oppositedir}) holds for all $A$, then Inequality (\ref{eq4.2.1}) can not hold for any distribution $\D$ on $\R^{k \times d}$.  Therefore, an $(\epsilon, \delta)$-JL distribution does not exist.  We make this precise in the following theorem:

\begin{theorem} \label{thm_lowbound}
Suppose that $\eta<4$ and let $k(\epsilon,\delta)=\left\lfloor\eta\epsilon^{-2} \log \left(  \frac{1}{\delta}  \right)\right\rfloor$.
Let $s_0=k/d$, and suppose that, for every $\epsilon$, $\delta$, and $s_0$ sufficiently small (to make $s_0$ sufficiently small, $d$ must be sufficiently large),
\[    \min\{\pP{s>s_0(1+\epsilon)},\pP{s<s_0(1-\epsilon)}\} \geq C\delta^{\frac{\eta}{4}\gamma},\]
where $C>0$ is an absolute constant, and $\gamma$ approaches $1$ as $\epsilon$, $\delta$, and $s_0$ approach $0$.
Then, by decreasing $\epsilon$, $\delta$, and $s_0$ as needed, for every matrix   $A \in \R^{k(\epsilon,\delta) \times d}$, 
				\[\pPP{\left|\|Aw\|^2_2-1\right|> \epsilon}{w \sim S^{d-1}} > \delta.\]
\end{theorem}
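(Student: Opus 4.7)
The plan is to reduce the arbitrary matrix $A$ to a form that fits the hypothesis of Lemma \ref{lemma_c}, and then chain that lemma with the assumed anti-concentration bound on $s$. First, I would take a singular value decomposition $A = U\Sigma V^t$, where $U \in \R^{k\times k}$ and $V \in \R^{d\times d}$ are orthogonal and $\Sigma \in \R^{k\times d}$ has singular values $\sigma_1,\ldots,\sigma_k$ on its main diagonal. Since $U$ is orthogonal, $\|Aw\|_2^2 = \|\Sigma V^t w\|_2^2$, and because the uniform measure on $S^{d-1}$ is invariant under orthogonal transformations, the random vector $y = V^t w$ is again uniform on $S^{d-1}$. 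Writing out the product gives
\[ \|Aw\|_2^2 \;=\; \sum_{i=1}^k \sigma_i^2 \, y_i^2 .\]

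Next, I would apply the decomposition $\Psi(y)=(s,u,v)$ introduced in Section \ref{Sec:Main:New}. For $s \in (0,1)$ (which covers $S^{d-1}$ up to a set of measure zero) the first $k$ coordinates satisfy $y_i = \sqrt{s}\,u_i$, hence
\[ \|Aw\|_2^2 \;=\; s \sum_{i=1}^k \sigma_i^2 u_i^2 \;=\; s\cdot c(u), \qquad c(u):=\sum_{i=1}^k \sigma_i^2 u_i^2.\]
The key structural fact is that $c$ is a function of $u \in S^{k-1}$ alone and is independent of $s$ and $v$. This is precisely the form to which Lemma \ref{lemma_c} applies; invoking that lemma immediately yields
\[ \pPP{\bigl|\|Aw\|_2^2-1\bigr| > \epsilon}{w \sim S^{d-1}} \;\geq\; \min\bigl\{\pP{s>s_0(1+\epsilon)},\; \pP{s<s_0(1-\epsilon)}\bigr\}.\]

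To finish, I would plug in the quantitative hypothesis, which bounds this minimum below by $C\delta^{\eta\gamma/4}$, and compare against $\delta$. The desired inequality $C\delta^{\eta\gamma/4} > \delta$ rewrites as $C > \delta^{1-\eta\gamma/4}$. Because $\eta<4$ is fixed while $\gamma\to 1$ as $\epsilon,\delta,s_0\to 0$, we have $\eta\gamma<4$ for sufficiently small parameters, so the exponent $1-\eta\gamma/4$ is strictly positive. Hence $\delta^{1-\eta\gamma/4}\to 0$ as $\delta\to 0$, and the required strict inequality holds once $\epsilon$, $\delta$, and $s_0$ are shrunk further if necessary, yielding the conclusion.

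The assembly above is essentially linear algebra plus one application of Lemma \ref{lemma_c}, and is not where the difficulty lies. The hard part is establishing the anti-concentration hypothesis in the first place: one must show that for $k = \lfloor\eta\epsilon^{-2}\log(1/\delta)\rfloor$ with $\eta<4$, the density $Bf(s) = B\, s^{(k-2)/2}(1-s)^{(d-k-2)/2}$ retains at least $C\delta^{\eta\gamma/4}$ mass in \emph{each} of the two tails $\{s<s_0(1-\epsilon)\}$ and $\{s>s_0(1+\epsilon)\}$, with the exponent $\eta\gamma/4$ sharp enough that $\gamma\to 1$ in the limit. This demands delicate asymptotics for the normalizing constant $B$ and the Beta density $f$, and is the content of Section \ref{Concentration}. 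A minor bookkeeping point handled along the way is that $c(u)$ may vanish when some $\sigma_i$ are zero; however, this either occurs on a measure-zero subset of $S^{k-1}$, or, if $A=0$ identically, forces $|\|Aw\|_2^2-1|=1>\epsilon$ with probability one, so the bound is preserved in either case.
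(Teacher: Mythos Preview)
Your proposal is correct and follows essentially the same route as the paper: SVD plus rotation invariance to write $\|Aw\|_2^2 = s\cdot c(u)$, then Lemma~\ref{lemma_c}, then the observation that $C\delta^{\eta\gamma/4}>\delta$ once $\eta\gamma<4$. The only cosmetic difference is that the paper disposes of rank deficiency by replacing $k$ with $\operatorname{rank}(A)$ (and shrinking $\eta$ accordingly) so that all $\sigma_i>0$, whereas you handle it by a measure-zero/$A=0$ argument; both are fine.
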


\begin{proof}
We assume that $A$ has rank $k=k(\epsilon,\delta)$ since, if not, we may reduce $k$ (and decrease $\eta$ correspondingly) to the rank of $A$.
Let $A=U\Sigma V^t$ be the singular value decomposition of $A$ where $U$ is a $k \times k$ orthonormal matrix, $V= (v_1, \ldots, v_d)$ is a $d \times d$ orthonormal matrix, 
and $\Sigma$ is a $k \times d$ diagonal matrix with $\lambda_i> 0$ its entry at $(i,i)$ for $1 \leq i \leq k$.
Let   
\[ x = (x_1, \ldots, x_d)^t = V^t w.\] 
Since $V$ is orthonormal,  we have  $x \in S^{d-1}$.  We observe that since $w$ is a uniformly distributed random variable on $S^{d-1}$, $V^tw$ is also a uniformly distributed random variable on $S^{d-1}$.
Therefore, since $U$ is orthonormal, we have
      \[ \|Aw\|^2_2 = \|U \Sigma x\|^2_2 = \|\Sigma x\|^2_2 = \sum_{i=1}^k \lambda_i^2x_i^2.\]	
Let $\Psi(x)=(s,u,v)$ where $s = x_1^2 + \cdots + x_k^2$.  We restrict our attention to the case where $s\in(0,1)$ since the complement has zero measure.
Let 
\[ c =  \sum_{i=1}^k \lambda_i^2x_i^2/s = \|\Sigma u \|^2_2,\]
then
\[
\pPP{\left|\|Aw\|^2_2-1\right|> \epsilon}{w\sim S^{d-1}}  =  \pPP{\left|s c -1\right|>\epsilon}{x\sim S^{d-1}}.\]
Due to the independence of $u$, $v$, and $s$, it follows that $c$ depends only on $u$.  Therefore, by Lemma \ref{lemma_c}, it follows that 
$$
\pPP{\left| \|Aw\|_2^2 -1\right| > \epsilon}{w\sim S^{d-1}}  \geq  C\delta^{\frac{\eta}{4}\gamma}.
$$
It follows that for $\epsilon$, $\delta$, and $s_0$ sufficiently small, $C\delta^{\frac{\eta}{4}\gamma}>\delta$.
\end{proof}
Since $d$ grows as $s_0$ approaches $0$, it follows from Theorem \ref{thm_lowbound}, that for $d$ sufficiently large, there is no $(\epsilon, \delta)$-JL distribution when $k< \eta\epsilon^{-2} \log \left(  \frac{1}{\delta}  \right)$ for $\eta<4$.  Therefore, $k_0(\epsilon,\delta)>\eta\epsilon^{-2}\log\left(\frac{1}{\delta}\right)$.
We collect the results of Theorems \ref{Thm: exists} and \ref{thm_lowbound} in the following corollary:
\begin{cor}
Assume the hypotheses on 
$\pP{s>s_0(1+\epsilon)}$ and $\pP{s<s_0(1-\epsilon)}$
from Theorems \ref{Thm: exists} and \ref{thm_lowbound} hold.
\begin{enumerate}[(a)]
\item There exists an $o(1)$ function that approaches $0$ as $\epsilon$ and $\delta$ approach zero such that if $k>4\epsilon^{-2} \log \left( \frac{1}{\delta} \right)\left[1+o(1)\right]$, then there exists a JL distribution. 
\item If $k(\epsilon,\delta)=\left\lfloor\eta\epsilon^{-2} \log \left(  \frac{1}{\delta}  \right)\right\rfloor$, then, by decreasing $\epsilon$ and $\delta$, and increasing $d$, there is no $(\epsilon, \delta)$-JL distribution for any $k'\leq k(\epsilon,\delta)$.
\end{enumerate}
\end{cor}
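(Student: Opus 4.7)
The plan is to assemble both parts as essentially immediate consequences of Theorems~\ref{Thm: exists} and~\ref{thm_lowbound}, together with the monotonicity observation from the introduction; all of the analytic work has already been done in those two theorems. For part~(a), I would simply cite Theorem~\ref{Thm: exists}: under the assumed tail bound on $s$, the scaled random orthonormal construction is an $(\epsilon,\delta)$-JL distribution as soon as $k>4\epsilon^{-2}\log(1/\delta)[1+o(1)]$, with the $o(1)$ function being exactly the one extracted explicitly inside that theorem's proof.

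For part~(b), I would apply Theorem~\ref{thm_lowbound} at the single dimension $k(\epsilon,\delta)=\lfloor\eta\epsilon^{-2}\log(1/\delta)\rfloor$ to obtain, for all sufficiently small $\epsilon,\delta$ (and correspondingly large $d$) and \emph{every} $A\in\R^{k(\epsilon,\delta)\times d}$,
$$\pPP{|\|Aw\|_2^2-1|>\epsilon}{w\sim S^{d-1}}>\delta.$$
To promote this pointwise-in-$A$ lower bound into a non-existence statement for distributions, I would run the standard averaging step: if some $\D$ on $\R^{k(\epsilon,\delta)\times d}$ were an $(\epsilon,\delta)$-JL distribution, then by definition $\pPP{|\|Aw\|_2^2-1|>\epsilon}{A\sim\D}<\delta$ for every $w$; integrating over $w\sim S^{d-1}$ and swapping the order of integration via Fubini (applicable because the integrand is a bounded indicator) would yield $\mathbb{E}_{A\sim\D}\bigl[\pPP{|\|Aw\|_2^2-1|>\epsilon}{w\sim S^{d-1}}\bigr]\le\delta$, which directly contradicts the strict pointwise lower bound just obtained.

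Finally, to upgrade non-existence from the single value $k(\epsilon,\delta)$ to every $k'\le k(\epsilon,\delta)$, I would invoke the monotonicity discussion from the introduction: any hypothetical $(\epsilon,\delta)$-JL distribution on $\R^{k'\times d}$ lifts to one on $\R^{k(\epsilon,\delta)\times d}$ by padding each matrix with $k(\epsilon,\delta)-k'$ rows of zeros, contradicting what we have just shown. The only subtlety worth flagging is that the strict inequality in Theorem~\ref{thm_lowbound} must survive the averaging step; this is automatic since the lower bound $C\delta^{\eta\gamma/4}$ exceeds $\delta$ strictly and uniformly in $A$ once $\epsilon,\delta$ are small enough (because $\eta\gamma/4<1$ in the limit), so there is no real obstacle in the corollary itself beyond careful bookkeeping of the small-parameter regime inherited from the two feeder theorems.
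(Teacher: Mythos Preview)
Your proposal is correct and matches the paper's approach: the paper does not give a separate proof of the corollary but treats it as an immediate assembly of the preceding results, with the Fubini/averaging step you describe appearing in the text just before Theorem~\ref{thm_lowbound} (Equations~(\ref{eq4.2.1})--(\ref{eq:oppositedir})) and the monotonicity in $k$ taken from the introduction. The only thing you are spelling out more carefully than the paper is the uniformity-in-$A$ of the strict lower bound $C\delta^{\eta\gamma/4}>\delta$, which is indeed what makes the averaging contradiction go through.
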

This proves the main result in the paper.  In the following sections, we provide the more technical results that verify the assumptions in Theorems \ref{Thm: exists} and \ref{thm_lowbound}.

\section{Uniform Distributions on Unit Spheres in High Dimensions} \label{Sec:UniformMeasure}

In this section, we prove the explicit relationship between the surface area differential forms $d\Omega_{d-1}$, $d\Omega_{k-1}$, and $d\Omega_{d-k-1}$.  In particular, we prove that

\begin{theorem} \label{Thm_ud}
Under the almost bijective map $\Psi: S^{d-1}   \rightarrow    [0,1] \times S^{k-1} \times S^{d-k-1}$, we have equality of the surface area differential forms on $S^{d-1}$, $S^{k-1}$, and $S^{d-k-1}$, i.e.,
  \[ d\Omega_{d-1} = \frac{1}{2} f(s)ds\, d\Omega_{k-1} d\Omega_{d-k-1},\]
   where $f(s)=s^{(k-2)/2}(1-s)^{(d-k-2)/2}$.
Equivalently,  in terms of probability distribution measures,
 \[  \frac{d\Omega_{d-1}}{\VV{S^{d-1}}{d-1}} =  B f(s) ds \frac{d\Omega_{k-1}}{\VV{S^{k-1}}{k-1}} \frac{d\Omega_{d-k-1}}{\VV{S^{d-k-1}}{d-k-1}}.\]
Hence, the uniform distribution on $S^{d-1}$ can be identified with the product distribution on $[0,1] \times S^{k-1} \times S^{d-k-1}$ where the distribution of $s$ on $ [0,1]$ has density function $Bf(s)$ and 
 \begin{equation} \label{Eqn:B}
 		 B = \frac{1}{2} \cdot \frac{\VV{S^{k-1}}{k-1} \cdot \VV{S^{d-k-1}}{d-k-1}}{\VV{S^{k-1}}{k-1}} = 
  \frac{\Gamma(\frac{d}{2})}{\Gamma(\frac{k}{2}) \Gamma(\frac{d-k}{2})}.
  \end{equation}
 \end{theorem}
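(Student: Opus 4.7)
The plan is to compute the Jacobian of the inverse map $\Phi: (0,1) \times S^{k-1} \times S^{d-k-1} \to S^{d-1}$ defined by $\Phi(s,u,v) = (\sqrt{s}\, u,\, \sqrt{1-s}\, v)$, which inverts $\Psi$ everywhere except on the set $\{0,1\} \times S^{k-1} \times S^{d-k-1}$. The image of this exceptional set in $S^{d-1}$ is the union of the subspheres $\{0\}\times S^{d-k-1}$ and $S^{k-1}\times\{0\}$, both of dimension strictly less than $d-1$, so they carry no $(d-1)$-dimensional surface measure and are safe to ignore. Once the pullback identity $\Phi^* d\Omega_{d-1} = \tfrac{1}{2} f(s)\, ds\, d\Omega_{k-1}\, d\Omega_{d-k-1}$ is established on the open part, the probability-measure version follows by dividing through by $\VV{S^{d-1}}{d-1}$ and reading off $B$ via the standard identity $\VV{S^{n-1}}{n-1} = 2\pi^{n/2}/\Gamma(n/2)$.

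To do the calculation, I would pick local orthonormal tangent frames $\tau_1,\dots,\tau_{k-1}$ to $S^{k-1}$ at $u$ and $\sigma_1,\dots,\sigma_{d-k-1}$ to $S^{d-k-1}$ at $v$ and compute the three types of pushforwards:
\[
d\Phi(\partial_s) = \Bigl(\tfrac{u}{2\sqrt{s}},\ -\tfrac{v}{2\sqrt{1-s}}\Bigr),\qquad d\Phi(\tau_i) = (\sqrt{s}\,\tau_i,\ 0),\qquad d\Phi(\sigma_j) = (0,\ \sqrt{1-s}\,\sigma_j).
\]
A one-line check shows each is orthogonal to the outward normal $(\sqrt{s}\,u,\sqrt{1-s}\,v)$ at $\Phi(s,u,v)$, so all $d-1$ vectors lie in $T_{\Phi(s,u,v)} S^{d-1}$ and span it.

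The key structural observation is that this collection is mutually orthogonal in $\R^d$: the block-diagonal layout forces $\langle d\Phi(\tau_i), d\Phi(\sigma_j)\rangle = 0$, while the inner products of $d\Phi(\partial_s)$ with the other two families collapse to $\langle u,\tau_i\rangle = 0$ and $\langle v,\sigma_j\rangle = 0$ respectively. Consequently the Jacobian of $\Phi$ at $(s,u,v)$ is simply the product of the Euclidean norms of these vectors. Since
\[
\|d\Phi(\partial_s)\|=\tfrac{1}{2\sqrt{s(1-s)}},\qquad \|d\Phi(\tau_i)\|=\sqrt{s},\qquad \|d\Phi(\sigma_j)\|=\sqrt{1-s},
\]
the product collapses to $\tfrac{1}{2\sqrt{s(1-s)}}\cdot s^{(k-1)/2}\cdot (1-s)^{(d-k-1)/2} = \tfrac{1}{2} s^{(k-2)/2}(1-s)^{(d-k-2)/2} = \tfrac{1}{2} f(s)$, which is exactly the claimed coefficient.

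The main subtlety, rather than any one calculation, lies in the measure-theoretic bookkeeping: one needs to confirm that $\Phi$ is a diffeomorphism on the open region $(0,1)\times S^{k-1}\times S^{d-k-1}$, that the two degenerate fibres are null, and that the unoriented surface-area forms transform under $\Phi$ exactly as the oriented ones do (so that the $\tfrac{1}{2}$ is not absorbed into an orientation sign). The computation of $B$ is then essentially automatic: either integrate $\tfrac{1}{2} f(s)$ over $[0,1]$ to recognise a Beta integral $\tfrac{1}{2} B(k/2,(d-k)/2)$ whose reciprocal is the stated $\Gamma(d/2)/(\Gamma(k/2)\Gamma((d-k)/2))$, or plug the closed-form volumes $\VV{S^{n-1}}{n-1}=2\pi^{n/2}/\Gamma(n/2)$ into $B=\tfrac12\VV{S^{k-1}}{k-1}\VV{S^{d-k-1}}{d-k-1}/\VV{S^{d-1}}{d-1}$ and simplify.
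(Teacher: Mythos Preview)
Your argument is correct and cleaner than the paper's, but it follows a genuinely different route. The paper does not work intrinsically on the spheres: instead it passes through coordinate charts. It first proves a lemma that on the upper hemisphere $d\Omega_{d-1} = x_d^{-1}\,dx_1\cdots dx_{d-1}$, then writes down an explicit map $\varphi:[0,1]\times D^{k-1}\times D^{d-k-1}\to D^{d-1}$ between flat disks, computes the $(d-1)\times(d-1)$ Jacobian matrix of $\varphi$ in coordinates, row-reduces it to extract the determinant $\frac{1}{2x_k}s^{(k-2)/2}(1-s)^{(d-k-1)/2}$, and finally converts back to sphere measures using the hemisphere lemma in dimensions $d$, $k$, and $d-k$ (the stray factors $x_k$, $y_{d-k}$, $w_d$ cancel because $w_d=\sqrt{1-s}\,y_{d-k}$). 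Your approach sidesteps all of this by choosing orthonormal tangent frames on $S^{k-1}$ and $S^{d-k-1}$ and observing that the pushforwards under $\Phi$ are mutually orthogonal in $\R^d$, so the Jacobian is just a product of norms. The paper's method is more elementary in that it never invokes Riemannian frames or the abstract change-of-variables on manifolds, only calculus on $\R^n$; yours is shorter, coordinate-free, and makes the source of the factor $\tfrac{1}{2}$ (namely $\|d\Phi(\partial_s)\|$) transparent rather than buried in a determinant.
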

 This theorem is based on the following lemma, which is well-known to experts, but is included here for completeness.
 
\begin{lemma}  \label{Lemma:SAofSphere}
Let $x=(x_1, \ldots, x_d)$ with $x_d>0$ be a  point on the upper hemisphere of $S^{d-1}$. Then the surface area  measure of the unit sphere $S^{d-1}$ at $x$ is
				\[d\Omega_{d-1} = \frac{1}{x_d}dx_1 \dots dx_{d-1}.\]
\end{lemma}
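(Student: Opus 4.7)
The plan is to parameterize the upper hemisphere of $S^{d-1}$ as a graph over the open unit ball in $\R^{d-1}$ and then apply the standard graph formula for the surface area element. Since $x_d > 0$ on the upper hemisphere and $x_1^2 + \cdots + x_d^2 = 1$, the coordinates $(x_1, \ldots, x_{d-1})$ range over the open ball $B^{d-1} = \{(x_1, \ldots, x_{d-1}) \in \R^{d-1} : x_1^2 + \cdots + x_{d-1}^2 < 1\}$, and $x_d$ is recovered as the smooth function
\[ x_d = g(x_1, \ldots, x_{d-1}) = \sqrt{1 - x_1^2 - \cdots - x_{d-1}^2}. \]
This gives a diffeomorphism from $B^{d-1}$ onto the upper hemisphere, so the surface area form pulls back to an honest $(d-1)$-form on $B^{d-1}$.

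Next, I would invoke the standard formula that for a hypersurface expressed as a graph $x_d = g(x_1, \ldots, x_{d-1})$ in $\R^d$, the surface area element is
\[ dS = \sqrt{1 + |\nabla g|^2}\, dx_1 \cdots dx_{d-1}, \]
which follows from computing the Gram determinant of the obvious parameterization $(x_1, \ldots, x_{d-1}) \mapsto (x_1, \ldots, x_{d-1}, g)$.

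The bulk of the work is then a short direct calculation. Differentiating implicitly (or directly from the formula for $g$) yields $\partial g/\partial x_i = -x_i/x_d$ for $1 \le i \le d-1$, so
\[ |\nabla g|^2 = \frac{1}{x_d^2}\sum_{i=1}^{d-1} x_i^2 = \frac{1 - x_d^2}{x_d^2}, \]
and hence $1 + |\nabla g|^2 = 1/x_d^2$. Taking square roots (using $x_d > 0$) gives $\sqrt{1 + |\nabla g|^2} = 1/x_d$, which, substituted into the graph formula, yields the claimed identity $d\Omega_{d-1} = \frac{1}{x_d}\, dx_1 \cdots dx_{d-1}$.

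There is no real obstacle here; the only mild subtlety is being clear that the equality of differential forms is stated after pulling back via the graph parameterization (as the paper notes, these pullbacks are suppressed throughout). The restriction $x_d > 0$ is essential both to ensure invertibility of the parameterization and to justify taking the positive square root in the last step, which is why the statement is formulated on the open upper hemisphere rather than the whole sphere.
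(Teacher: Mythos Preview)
Your proof is correct. Both you and the paper parameterize the upper hemisphere as a graph over the disk, but the arguments diverge in how the scaling factor is extracted. You invoke the standard graph formula $dS = \sqrt{1+|\nabla g|^2}\,dx_1\cdots dx_{d-1}$ and finish with a two-line computation of $|\nabla g|^2$. The paper instead avoids quoting that formula: it extends the graph map to a full-dimensional map $\Phi_d: D^{d-1}\times\R \to \R^d$ by pushing points radially, observes that the extra direction maps to the unit outward normal of $S^{d-1}$, and then reads off the area ratio as the determinant of the $d\times d$ Jacobian of $\Phi_d$ restricted to $D^{d-1}\times\{0\}$, which row-reduces to $1/x_d$. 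Your route is shorter and entirely standard; the paper's route is more self-contained in that it effectively rederives the graph formula from the volume-scaling property of determinants rather than assuming it.
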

Before we begin the proof, we recall the approach for $S^2$ in $3$-dimensional space.  We consider the upper hemisphere of $S^2$ as the graph of a function over $D^2$, where $D^{d-1}$ denotes
 $(d-1)$-dimensional disk, namely
\[ D^{d-1}=\left\lbrace \hat{x} \in \R^{d-1}: \sum_{i=1}^{d-1} \hat{x}_i^2 \leq 1\right\rbrace.\]
We then integrate over the disk $D^2$ to calculate the surface area of $S^2$.  In particular, the integrand is the limit of the ratios of the area of a square in $D^2$ to the area of the corresponding parallelogram above the square in the tangent space of $S^2$ as the square shrinks a point.  In the case of the sphere, the parallelogram's area is calculated using the cross product, but we must replace the use of the cross product in higher dimensions.
				
\begin{proof}
In $d$-dimensional space, we consider the upper hemisphere of $S^{d-1}$ as the graph of a function over the $(d-1)$-dimensional disk $D^{d-1}$.
We  construct a pair of $(d-1)$-dimensional parallelepipeds as follows: $P^{d-1}$ is in the tangent space of $D^{d-1}$ and $Q^{d-1}$ is in the tangent space of $S^{d-1}$.  Then, we take the limit of their $(d-1)$-dimensional volumes as $P^{d-1}$ approaches a point.  
			Due to complications in taking the $(d-1)$-dimensional volume in $d$-dimensional space, we extend both $P^{d-1}$ and $Q^{d-1}$ to associated, full-dimensional parallelepipeds.
			
			Let $(\hat{x}_1,\dots,\hat{x}_{d-1})\in D^{d-1}$ and define $\phi: D^{d-1} \rightarrow \R_{\geq 0}$ as 
			\[\phi(\hat{x}_1, \ldots, \hat{x}_{d-1}) = \sqrt{1-\sum_{i=1}^{d-1} \hat{x}_i^2}.\]
			We observe that the graph of this function is the upper hemisphere of $S^{d-1}$.  
			We now extend this map to $D^{d-1}\times\mathbb{R}$ as $\Phi_d: D^{d-1} \times \R \rightarrow \R^d$ defined by
				\[(\hat{x}_1, \ldots, \hat{x}_{d-1}, \hat{x}_d) \mapsto \left( (1+\hat{x}_d)\hat{x}_1, (1+\hat{x}_d)\hat{x}_2, \ldots, (1+\hat{x}_d)\hat{x}_{d-1}, (1+\hat{x}_d)\phi(\hat{x}_1, \ldots, \hat{x}_{d-1}) \right).\]  
				We observe that $\Phi_d|_{D^{d-1}\times\{0\}}$ maps the disk $D^{d-1}\times\{0\}$ surjectively onto the graph of $\phi$, i.e., the upper hemisphere of $S^{d-1}$, see Figure 
			\ref{Figure: MapPhi}.
			
				\begin{figure}[htb]
				\begin{center}
				\begin{tikzpicture}
				\begin{scope}
				\begin{axis}
				    [  
				        hide axis,
				        view={0}{10},
				        z buffer=sort,
				     height = 2.5in,
				     width = 2in
				    ]
				        \addplot3
				        [   domain=0:360,
				            y domain=0:90,
				            surf,
				            shader=flat,
				            blue,
				            opacity=0.4
				        ] ({sin(y)*cos(x)},{sin(y)*sin(x)},{cos(y)});
				        \addplot3
				        [ domain = 0:360,
				        y domain = 0:1,
				        surf,
				        shader=flat,
				        black	,
				        opacity =0.4	        
				        ] ({cos(x)*y},{sin(x)*y},{-2});
				        \draw [<-,line width=1pt] (-.4,0,-.5) --node[left]{$\Phi_d$} (-.4,0,-1.5);
				        \draw [->,line width=1pt] (.4,0,-.5) --node[right]{$\pi$} (.4,0,-1.5);
				    \end{axis}
				\end{scope}
				\end{tikzpicture}
				\end{center}
				\caption{$\Phi_d$ maps the disk $D^{d-1}\times\{0\}$ surjectively onto the upper hemisphere of the $(d-1)$-dimensional unit sphere $S^{d-1}$. 
				We observe 
				that $(\Phi_d)^{-1} = \pi$ is the projection map onto the first $d-1$ coordinates.
				 \label{Figure: MapPhi}}
				\end{figure}
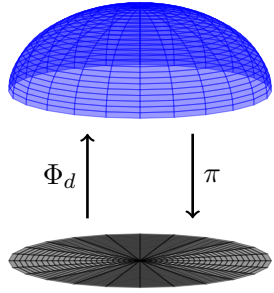

We recall that, at $(\hat{x}_1,\dots,\hat{x}_{d-1})\in D^{d-1}$, the tangent space is $\mathbb{R}^{d-1}$ and we define the parallelpiped $P^{d-1}$ in the tangent space by the vectors $\Delta \hat{x}_i e_i$ of length $\Delta \hat{x}_i$ in the direction of the $i^{\text{th}}$ standard basis vector $e_i$ of $\R^{d-1}$.  Similarly, the tangent space at $(\hat{x}_1,\dots,\hat{x}_{d-1},0)\in D^{d-1}\times\R$ is $\mathbb{R}^d$, and we define the parallelepiped $P^d$ in this tangent space by the vectors $\Delta \hat{x}_i e_i$ for $1\leq i\leq d-1$ and $he_d$ for the final direction.  Then, as the vector $he_d$ is perpendicular to the tangent vectors 
			of the disk $D^{d-1}$, the $d$-dimensional volume of $P^d$ can be computed in terms of the $(d-1)$-dimensional volume of $P^{d-1}$ and the height $h$, i.e.,
				\begin{equation*} 
				\VV{P^d}{d} = h \VV{P^{d-1}}{d-1}.
				\end{equation*}

			Next, we let $Q^{d-1}$ and $Q^d$ be the images of $P^{d-1}$ and $P^d$ under the Jacobian of $\Phi_d$, i.e., $\Jac\Phi_d$, respectively.  Note that the Jacobian of $\Phi_d$, when restricted to $D^{d-1} \times \{0\}$, is
			\begin{align*}
			\left(\Jac \Phi_d \right)|_{D^{d-1} \times \{0\}} & =  
					\left[
					\begin{array}{ccc|c}
					&&&\hat{x}_1 \\[.5cm]
					&I&&\vdots\\[.5cm]
					&&& \hat{x}_{d-1} \\[.3cm]\hline
					-\frac{ \hat{x}_1}{\phi(\hat{x}_1, \ldots, \hat{x}_{d-1})}&\dots&-\frac{ \hat{x}_{d-1}}{\phi(\hat{x}_1, \ldots, \hat{x}_{d-1})}& \phi(\hat{x}_1, \ldots, \hat{x}_{d-1})
					\end{array}
					\right].
				\end{align*}
				Since the Jacobian acts on tangent vectors, $Q^{d-1}$ is defined by the vectors 	
				\[\Delta \hat{x}_i \left(f_i - \frac{\hat{x}_i}{\phi(\hat{x}_1, \ldots, \hat{x}_{d-1})}f_d\right),\]
			for $1\leq i\leq d-1$, where the $f_i$ is the $i^{\text{th}}$ standard basis vector of $\mathbb{R}^d$.
			Moreover, $Q^d$ is defined by these vectors as well as the image of $he_d$, i.e., 
								\[h\left( \hat{x}_1, \ldots,\hat{x}_{d-1}, \phi(\hat{x}_1, \ldots, \hat{x}_{d-1} )\right) .\]
			We observe that the vectors $\Delta \hat{x}_i \left(f_i - \frac{\hat{x}_i}{\phi(\hat{x}_1, \ldots, \hat{x}_{d-1})}f_d\right)$ for $1\leq i \leq d-1$ are tangent vectors to the sphere $S^{d-1}$, 
			and 
			that $h\left( \hat{x}_1, \ldots,\hat{x}_{d-1}, \phi(\hat{x}_1, \ldots, \hat{x}_{d-1} )\right)$ is the outward pointing surface normal with length $h$.  Since the tangent vectors are perpendicular to the outward point normal, the volumes of $Q^{d-1}$ and $Q^d$ have a similar relationship as the volumes of $P^{d-1}$ and $P^d$, i.e., 
				\begin{equation*}
				\VV{Q^d}{d} = h \VV{Q^{d-1}}{d-1}.
				\end{equation*}
				Therefore, the ratio between the $d$-dimensional volumes of $Q^d$ and $P^d$ is the same as the ratio of the $(d-1)$-dimensional volumes of $Q^{d-1}$ and $P^{d-1}$.
					
			Since $Q^d$ is the image of $P^d$ under the linear map $\Jac \, \Phi_d \vline_{D^{d-1} \times \{0\}}$, it follows that 		 \begin{equation*} 
				\VV{Q^d}{d} = \, \left\vert \det \left(\Jac \Phi_d \right)\vline_{\left(\hat{x}_1, \ldots, \hat{x}_{d-1}, 0 \right)} \, \right\vert  \VV{P^d}{d}.
				\end{equation*}
				Therefore, the ratio of the volumes of $Q^{d-1}$ and $P^{d-1}$ is $\left| \det \left(\Jac \Phi_d \right)\vline_{\left(\hat{x}_1, \ldots, \hat{x}_{d-1}, 0 \right)} \, \right|$.
				It is straight-forward to compute the determinant of $\Jac(\Phi_d)$ at the point  $\left( \hat{x}_1, \ldots,\hat{x}_{d-1}, 0\right)$ via a few row reductions to eliminate the first $d-1$ entries in the last row and turn the matrix into an upper triangular matrix whose lower right corner is $\frac{1}{\phi(\hat{x}_1, \ldots, \hat{x}_{d-1})}$.
				Hence, the determinant of 
			$\Jac(\Phi_d)$ is $\frac{1}{\phi(\hat{x}_1, \ldots, \hat{x}_{d-1})}$, which is the desired scaling factor.

			Therefore, $\frac{1}{\phi(\hat{x}_1, \ldots, \hat{x}_{d-1})}$ is the local factor in the stretching of the surface area in the map from $D^{d-1}$ to $S^{d-1}$.  
			We recall that the coordinates $x_1,\dots,x_d$ are the coordinates on the upper hemisphere and $\hat{x}_1,\dots,\hat{x}_{d-1}$ are the coordinates on $D^{d-1}$.
			 Since, under the map $\Phi_d|_{D^{d-1}\times\{0\}}$, $x_i=\hat{x}_i$ for $1\leq i\leq d-1$, it follows that
			$$
			d\hat{x}_1\dots d\hat{x}_{d-1}=dx_1\dots dx_{d-1}\qquad\text{and}\qquad \phi(\hat{x}_1,\dots,\hat{x}_{d-1})=x_d.
			$$
			From here, the result follows directly.		
		\end{proof}


\noindent {\it Proof of Theorem \ref{Thm_ud}}. 
Let $(w_1, \ldots, w_d)$, $(x_1, \ldots,x_k)$, and $(y_1, \ldots, y_{d-k})$ be coordinates of points on the $(d-1)$-dimensional unit sphere, 
			the $(k-1)$-dimensional unit sphere, and the $(d-k-1)$-dimensional unit sphere, respectively. 
\rmv{			
To show $\Psi$ is onto, consider an arbitrary $w \in S^{d-1}$, and let $s = \sum_{i=1}^k w_i^2$. Then, $1-s = \sum_{i=k+1}^d w_i^2$. As a result, $w$ can be 
			decomposed into two unit vectors scaled by $\sqrt{s}$ and $\sqrt{1-s}$, respectively. That is, when $s \in (0,1)$,
				\[w = \sqrt{s}\left(\frac{w_1}{\sqrt{s}}, \ldots, \frac{w_k}{\sqrt{s}}\right) \times \sqrt{1-s}\left(\frac{w_{k+1}}{\sqrt{1-s}}, \ldots, \frac{w_d}{\sqrt{1-s}}\right).\]
			If $s =0$, $w= 0 (1,0, \ldots, 0) \times 1(w_{k+1}, \ldots, w_d)$. If $s=1$, then $w=1(w_1, \ldots, w_k) \times 0 (1,0, \ldots, 0)$.			
			Therefore, we let $x= \left(\frac{w_1}{\sqrt{s}}, \ldots, \frac{w_k}{\sqrt{s}}\right) $ and $y=\left(\frac{w_{k+1}}{\sqrt{1-s}}, \ldots, \frac{w_d}{\sqrt{1-s}}\right)$ when $s \in (0,1)$; 
			$x=e_1$ and $y = (w_{k+1}, \ldots, w_d)$ when $s=0$; and $x=(w_1, \ldots, w_k)$ and $y=e_1$ when $s=1$. We 
			observe that 
			$x \in S^{k-1}$ and $y \in S^{d-k-1}$ and that $\Psi(s,x,y)=w$.
}			
Let $(\hat{w}_1,  \ldots, \hat{w}_{d-1})$, $(\hat{x}_1, \ldots,\hat{x}_{k-1})$, and $(\hat{y}_1, \ldots, \hat{y}_{d-k-1})$ be the coordinates of points on the disks $D^{d-1}$, 
$D^{k-1}$ and $D^{d-k-1}$, respectively.	Let
				\[\varphi: [0,1] \times D^{k-1} \times D^{d-k-1} \rightarrow D^{d-1}\]
be defined by 
				\begin{align*}
				s \times (\hat{x}_1, \ldots, \hat{x}_{k-1}) \times & (\hat{y}_1, \ldots, \hat{y}_{d-k-1}) \mapsto \\
				 & \left( \sqrt{s}\hat{x}_1, \ldots, \sqrt{s}\hat{x}_{k-1},\sqrt{s} 
				\sqrt{1-\sum_{i=1}^{k-1} \hat{x}_i^2} ,  \sqrt{1-s} \hat{y}_1, \ldots, \sqrt{1-s} \hat{y}_{d-k-1} \right).
				\end{align*}
			We observe that $\varphi$ maps the disks $D^{k-1}$ and $D^{d-k-1}$ onto the half of the disk $D^{d-1}$ whose $k^{th}$ coordinate is nonnegative.
	As the measure of the image is the measure 
			of the preimage scaled by the determinant of the Jacobian of $\varphi$, the surface area measure of the disk $D^{d-1}$ is
				\begin{equation} \label{Eq: det_disks}
					d\hat{w}_1 \dots d\hat{w}_{d-1} = |\det \Jac( \varphi)	| ds\,d\hat{x}_1 \dots d\hat{x}_{k-1}  d\hat{y}_1 \dots d\hat{y}_{d-k-1}.
				\end{equation}	 
			The Jacobian of $\varphi$ for $s\in(0,1)$ is 		
				$$
				\text{Jac }\varphi = 
				\left[
				\begin{array}{c|ccc|ccc}
					\frac{\hat{x}_1}{2\sqrt{s}}	& \sqrt{s} & \dots & 0 & 0& \dots & 0 \\[.3cm]
					\vdots & \vdots & \ddots & \vdots & \vdots & \ddots & \vdots \\[.4cm]
					\frac{\hat{x}_{k-1}}{2\sqrt{s}} & 0 & \dots & \sqrt{s} &  0 & \dots & 0\\ [.3cm] \hline
					\frac{\sqrt{1-\sum_{i=1}^{k-1}\hat{x}_i^2}}{2\sqrt{s}} &  \frac{-\sqrt{s}\cdot \hat{x}_1}{\sqrt{1-\sum_{i=1}^{k-1}\hat{x}_i^2}} & \dots & \frac{-\sqrt{s}\cdot 
					\hat{x}_{k-1}}
					{\sqrt{1-\sum_{i=1}^{k-1}\hat{x}_i^2}} & 0 & \dots & 0 \\[.35cm] \hline
					\frac{-\hat{y}_1}{2\sqrt{1-s}} & 0 & \dots & 0 & \sqrt{1-s} & \dots & 0 \\[.3cm]
					\vdots & \vdots & \ddots & \vdots & \vdots & \ddots & \vdots \\ [.4cm]
					\frac{-\hat{y}_{d-k-1}}{2\sqrt{1-s}} & 0 & \dots & 0 & 0 & \dots & \sqrt{1-s}\\ [.3cm] 
				\end{array}
				\right].
				$$
			Eliminating all but the $k^{th}$ entry of the first column by
			adding multiples of the other columns to the first column, we obtain
				\[\det \Jac ( \varphi ) =	\frac{1}{2\hat{x}_k} s^{\frac{k-2}{2}} (1-s)^{\frac{d-k-1}{2}}.\]		
			Substituting this value into Expression (\ref{Eq: det_disks}), we have the surface area measure of $D^{d-1}$  in terms of the disks $D^{k-1}$ and $D^{d-k-1}$. 
			That is,	
				\begin{equation} \label{Eqn: SA of D}
				d\hat{w}_1 \dots d\hat{w}_{d-1} = \frac{1}{2\hat{x}_{k}} s^{(k-2)/2} (1-s)^{(d-k-1)/2} ds \, d\hat{x}_1 \dots d\hat{x}_{k-1} d\hat{y}_1 \dots 
				d\hat{y}_{d-k-1}. 
				\end{equation}
			We observe that the coordinates of the disk $D^{t-1}$ correspond to the first $t-1$ entries of coordinates  of the unit sphere $S^{t-1}$.  Therefore, we may extend $\varphi$ to the map $\Psi$, as defined above, where 
			$$\Psi^{-1}=\Phi_d\circ\varphi\circ (id_s\times(\Phi_k)^{-1}\times(\Phi_{d-k})^{-1}).
			$$			
			Employing 
			the 
			results of Lemma \ref{Lemma:SAofSphere} in various dimensions, we rewrite the surface measure of a unit sphere in terms of the surface measure of the corresponding disks:
				\begin{align*}
						 d\hat{x}_1 \dots d\hat{x}_{k-1} &= dx_1 \dots dx_{k-1} = x_k d\Omega_{k-1} \\
						d\hat{y}_1 \dots d\hat{y}_{d-k-1} &= dy_1 \dots dy_{d-k-1} = y_{d-k} d\Omega_{d-k-1} \\
					 d\hat{w}_1 \dots d\hat{w}_{d-1} &= dw_1 \dots dw_{d-1} = w_d d\Omega_{d-1}.  
				\end{align*}			
By applying the $\Psi$, we can substitute these three equalities into Equation (\ref{Eqn: SA of D}) to obtain
\begin{align*}
 d\Omega_{d-1} = \frac{1}{w_d} dw_1 \dots dw_{d-1}
       &= \frac{y_{d-k}}{2w_d} s^{(k-2)/2} (1-s)^{(d-k-1)/2} ds\, d\Omega_{k-1}  d\Omega_{d-k-1}\\
       &\hspace{-1in}= \frac{1}{2} s^{(k-2)/2} (1-s)^{(d-k-2)/2}  ds \, d\Omega_{k-1} d\Omega_{d-k-1}=\frac{1}{2} f(s)ds \, d\Omega_{k-1} d\Omega_{d-k-1},
\end{align*}
where the third equality follows from the fact that $w_d = \sqrt{1-s}y_{d-k}$ by the map $\Psi$.  Since the cases where $s=0$ or $s=1$ have measure $0$, the result follows. \qed


\section{Explicit Concentration Bounds}\label{Concentration}

Throughout  this section,  we assume that  $s$ is a random variable with probability distribution $Bf(s)$.  We define $s_0=\frac{k}{d}$, and further assume that $0\leq \epsilon,\delta\leq 1/2$, $k-4\geq\epsilon^{-2}$, and $s_0<0.4$.  We derive lower and upper  bounds for the following probabilities:
\[  \pP{s>s_0(1+\epsilon)}\quad\text{and}\quad \pP{s<s_0(1-\epsilon)},\]
using the probability density $Bf(s)$ for $s$ and $f(s)=s^{(k-2)/2}(1-s)^{(d-k-2)/2}$.
These bounds are instances of explicit  concentration theorems (or explicit  laws of large numbers) from probability theory.
Our goal is to formulate these bounds as precisely as possible so that the lower and upper bounds are asymptotically the same
when $\epsilon$ and $\delta$ approach $0$.
			
\subsection{Bounds for B} 

We recall that $\Gamma(1/2) = \sqrt{\pi}$, $\Gamma(1) = 1$, and $\Gamma(1+z) = z\Gamma(z)$, hence 
\[ \Gamma\left(\frac{d}{2}\right) =  \left(\frac{d}{2} -1\right)!  \mbox{  if $d$ is even},  \quad \quad  \Gamma\left(\frac{d}{2}\right) =  \left(\frac{d}{2} -1\right)\cdot \left(\frac{d}{2} -2\right) \cdots \frac{3}{2}\cdot \frac{1}{2} \cdot \sqrt{\pi}  \mbox{  if $d$ is odd}.\]
In this section, we derive lower and upper bounds for $B$, see Equation (\ref{Eqn:B}), by  using  the following form of Stirling's approximation of $n!$ due to Robbins \cite{Robbins}: 
					\[\sqrt{2\pi}n^{n+1/2}e^{-n}e^{\frac{1}{12n+1}} <  \Gamma(n+1) = n! < \sqrt{2\pi} n^{n+1/2}e^{-n}e^{\frac{1}{12n}}.\]
Since we are interested in the asymptotic behavior, we focus on the case where $d$ is even.  This choice does not affect the asymptotic results of our paper, but the calculations are more straight-forward in this case.  We leave the details for the case where $d$ is odd to the interested reader.

\begin{lemma} \label{Lemma:B_Bound} 	Suppose $k$ and $d$ are both even. Then we have the following inequality\footnote{It is possible, to derive tighter bounds on constants, but the ones appearing here are sufficient for our proofs.  We leave the details of the tighter bounds to the interested reader}:
\[\frac{e^{-2}}{2\sqrt{\pi}} \cdot \frac{\left(d-2\right)^{(d-1)/2}}{ \left(k-2 \right)^{(k-1)/2}\left(d-k-2\right)^{(d-k-1)/2}} \leq B \leq \frac{e^{-1}}{2\sqrt{\pi}} \cdot 
		 \frac{\left(d-2\right)^{(d-1)/2}} {\left(k-2 \right)^{(k-1)/2}\left(d-k-2\right)^{(d-k-1)/2}}. \]
\end{lemma}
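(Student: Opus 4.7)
The plan is to substitute the explicit factorial formula for $\Gamma(d/2)$, $\Gamma(k/2)$, $\Gamma((d-k)/2)$ (using the assumption that $k$ and $d$ are even, whence $d-k$ is even as well) and then apply Robbins' refinement of Stirling's formula term by term. Writing $n_1 = d/2-1$, $n_2 = k/2-1$, $n_3 = (d-k)/2-1$, the crucial arithmetic identity is $n_1 = n_2 + n_3 + 1$, which forces the exponential factors coming from $e^{-n}$ to collapse into a single overall $e^{-1}$:
\[
\frac{e^{-n_1}}{e^{-n_2}\,e^{-n_3}} = e^{-(n_1-n_2-n_3)} = e^{-1}.
\]
This is where the $e^{-1}$ (respectively $e^{-2}$) in the upper (respectively lower) bound will come from, with the second power of $e^{-1}$ in the lower bound absorbing the Robbins correction terms.

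Next I would simplify the polynomial part $n_1^{n_1+1/2}/(n_2^{n_2+1/2} n_3^{n_3+1/2})$. Writing $n_1 = (d-2)/2$, $n_2 = (k-2)/2$, $n_3 = (d-k-2)/2$ and pulling the factors of $2$ out of every exponent, the powers of $2$ sum (via $(k-1)/2 + (d-k-1)/2 - (d-1)/2 = -1/2$) to contribute exactly $2^{-1/2}$. Combined with the $1/\sqrt{2\pi}$ coming from Robbins (one $\sqrt{2\pi}$ in the numerator, two in the denominator), this gives the announced prefactor
\[
\frac{1}{\sqrt{2\pi}}\cdot 2^{-1/2} \;=\; \frac{1}{2\sqrt{\pi}},
\]
and leaves behind exactly the ratio $(d-2)^{(d-1)/2}/\bigl[(k-2)^{(k-1)/2}(d-k-2)^{(d-k-1)/2}\bigr]$ asserted in the lemma.

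All that remains is to control the Robbins error terms $e^{1/(12n_j+\star)}$. For the upper bound on $B$, one uses the Robbins upper bound on the numerator and the Robbins lower bounds on the denominator factors, so the residual exponential is $\exp\bigl(\tfrac{1}{12n_1} - \tfrac{1}{12n_2+1} - \tfrac{1}{12n_3+1}\bigr)$; since $n_1 = n_2+n_3+1 \geq \max(n_2,n_3)+1$, the two subtracted terms dominate, and a short estimate (e.g.\ using $\max(12n_2+1,12n_3+1)\le 12(n_1-1)+1$) shows this exponent is nonpositive, so the factor is $\leq 1$. For the lower bound on $B$, the roles swap and the residual exponential is $\exp\bigl(\tfrac{1}{12n_1+1} - \tfrac{1}{12n_2} - \tfrac{1}{12n_3}\bigr)$; here it suffices to note crudely that $\tfrac{1}{12n_2}+\tfrac{1}{12n_3} \leq \tfrac{1}{6} < 1$, so the factor is $\geq e^{-1}$, which combined with the earlier $e^{-1}$ yields the required $e^{-2}$.

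The main obstacle, modest as it is, is the bookkeeping of the three Robbins corrections to get a clean, dimension-free constant. The proof requires no new ideas beyond Robbins' inequality and the identity $n_1 = n_2 + n_3 + 1$; the constants $e^{-1}$ and $e^{-2}$ in the lemma are deliberately loose so that the error-term bounds do not need to be sharp, which is why the authors remark that a tighter bound is possible.
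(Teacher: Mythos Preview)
Your proposal is correct and follows essentially the same approach as the paper: apply Robbins' Stirling bounds to each of the three factorials $\Gamma(d/2)=n_1!$, $\Gamma(k/2)=n_2!$, $\Gamma((d-k)/2)=n_3!$, use $n_1=n_2+n_3+1$ to collapse the $e^{-n}$ factors to $e^{-1}$, and absorb the Robbins correction terms into the constants. The paper's proof is terser---it simply writes down the resulting $C_0$ and $C_1$ with the Robbins exponents $e^{\pm 1/(6(d-2))}$ etc.\ and asserts the bounds---whereas you spell out the intermediate bookkeeping (the $2^{-1/2}$ from the powers of~$2$, the sign of the residual exponents), but the argument is the same.
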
			 			
\begin{proof}
				Using the bound on $n!$ from Robbins \cite{Robbins}, we obtain
				\[C_0 \frac{\left(d-2\right)^{(d-1)/2}}{ \left(k-2 \right)^{(k-1)/2}\left(d-k-2\right)^{(d-k-1)/2}} \leq B \leq C_1 \frac{\left(d-2\right)^{(d-1)/2}} {\left(k-2 
				\right)^{(k-1)/2}\left(d-k-2\right)^{(d-k-1)/2}}, \]
				where 
					\[C_0 = \frac{1}{2\sqrt{\pi}}e^{-1}e^{\frac{1}{6(d-2)+1}}e^{\frac{-1}{6(k-2)}}e^{\frac{-1}{6(d-k-2)}} \geq \frac{e^{-2}}{2\sqrt{\pi}}, \quad 
					\text{and}\]
					\[C_1 = \frac{1}{2\sqrt{\pi}}e^{-1}e^{\frac{1}{6(d-2)}}e^{\frac{-1}{6(k-2)+1}}e^{\frac{-1}{6(d-k-2)+1}} \leq \frac{e^{-1}}{2\sqrt{\pi}}.\vspace{-.25in}\]
				\end{proof}
				
			\begin{cor} \label{Corollary:Bsf_Bound}
				With $s_0 = k/d$, we have
					\[\frac{e^{-2}}{2\sqrt{\pi}}  \sqrt{k} \leq Bs_0f(s_0) \leq \frac{9e^{-1}}{\sqrt{2\pi}} \sqrt{k}.\]
				\end{cor}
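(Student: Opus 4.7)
The plan is to substitute $s = s_0 = k/d$ directly into $f(s)$, multiply by the bounds on $B$ from Lemma \ref{Lemma:B_Bound}, and then carefully track the various $(1\pm 2/n)^{m}$ factors that emerge. Explicitly, I compute
\[
s_0 f(s_0) = \frac{k}{d}\cdot\Bigl(\tfrac{k}{d}\Bigr)^{(k-2)/2}\Bigl(\tfrac{d-k}{d}\Bigr)^{(d-k-2)/2} = \frac{k^{k/2}(d-k)^{(d-k-2)/2}}{d^{(d-2)/2}},
\]
so that, after combining with Lemma \ref{Lemma:B_Bound}, the quantity $Bs_0 f(s_0)$ is sandwiched between two constant multiples of
\[
R(d,k) \;:=\; \frac{(d-2)^{(d-1)/2}}{d^{(d-2)/2}}\cdot\frac{k^{k/2}}{(k-2)^{(k-1)/2}}\cdot\frac{(d-k)^{(d-k-2)/2}}{(d-k-2)^{(d-k-1)/2}}.
\]

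Next I would pull out the ``algebraic'' part of each of the three fractions. Concretely, I rewrite
\[
R(d,k) = \sqrt{d-2}\cdot\sqrt{k-2}\cdot\frac{1}{\sqrt{d-k-2}}\cdot\Bigl(1-\tfrac{2}{d}\Bigr)^{(d-2)/2}\Bigl(1+\tfrac{2}{k-2}\Bigr)^{k/2}\Bigl(1+\tfrac{2}{d-k-2}\Bigr)^{(d-k-2)/2}\!\cdot\text{minor extra factors},
\]
so that the three $(1\pm 2/n)^{m}$ factors each tend to $e^{-1}$, $e$, $e$ respectively, and the leading algebraic coefficient reduces to
\[
\sqrt{\frac{(d-2)(k-2)}{d-k-2}} \;\approx\; \sqrt{\frac{k}{1-s_0}}.
\]
The asymptotic product of the three exponential factors is $e^{-1}\cdot e\cdot e = e$, and combined with the $e^{-1}/(2\sqrt{\pi})$ prefactor of $B$, this matches the constants in the claimed bounds.

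To turn this heuristic into rigorous inequalities I would use the elementary one-sided bounds $(1-2/n)^{n/2}\le e^{-1}$, $(1+2/n)^{n/2}\le e$, together with explicit reverse inequalities valid under the hypotheses $k-4\ge\epsilon^{-2}$ (so $k\ge 6$) and $s_0<0.4$ (so $1-s_0>3/5$ and $d-k$ is comparable to $d$). In particular $\sqrt{1/(1-s_0)}<\sqrt{5/3}$, which together with small corrections of the form $\sqrt{(k-2)/k}$ and $\sqrt{(d-2)/(d-k-2)}/\sqrt{1/(1-s_0)}$ will absorb into the constant $9$ on the upper side; a parallel but looser set of bounds yields the lower side.

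The main obstacle is bookkeeping the constants cleanly: each of the three $(1\pm 2/n)^{m}$ factors contributes both an exponential limit and a polynomial-in-$1/n$ correction, and on the upper side one needs a non-asymptotic upper bound of the form $(1+2/(k-2))^{k/2}\le e\cdot g(k)$ with $g$ uniformly bounded for $k\ge 6$; similarly for the $d-k$ factor. Pushing all such corrections plus the factor $\sqrt{1/(1-s_0)}<\sqrt{5/3}$ into the explicit constant $9$ (rather than $e$) is where the slack in the stated inequality comes from, and this is the only place care is required — everything else is substitution and algebraic simplification.
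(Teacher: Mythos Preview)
Your approach is essentially the same as the paper's --- substitute $s_0=k/d$ into $f$, multiply by the bounds on $B$ from Lemma~\ref{Lemma:B_Bound}, and control the resulting product of ratios --- but the paper organizes the algebra much more cleanly and avoids the bookkeeping you flag as the main obstacle.

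For the \emph{lower bound}, the paper does not separate out $(1\pm 2/n)^m$ factors at all. Instead it writes
\[
Bs_0f(s_0)\;\ge\;\frac{e^{-2}}{2\sqrt{\pi}}\,\sqrt{k}\,
\left(\frac{d-2}{d-k}\right)^{1/2}
\left(\frac{k(d-2)}{d(k-2)}\right)^{(k-1)/2}
\left(\frac{(d-k)(d-2)}{d(d-k-2)}\right)^{(d-k-1)/2},
\]
and simply observes that each of the three bracketed fractions is $\ge 1$ (because $d>k$, $k(d-2)>d(k-2)$, and $(d-k)(d-2)>d(d-k-2)$). No exponential estimates, no reverse inequalities --- the lower bound drops out in one line. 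Your plan of tracking asymptotic limits $e^{-1}\cdot e\cdot e$ and then patching finite-size corrections would work, but is considerably more laborious for the same conclusion.

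For the \emph{upper bound}, rather than bounding each $(1+2/n)^{n/2}$ separately, the paper groups things so that the three nontrivial factors are $(\tfrac{d-2}{d})^{(d-1)/2}\le 1$, $(\tfrac{k}{k-2})^{(k-1)/2}$, and $(\tfrac{d-k}{d-k-2})^{(d-k-1)/2}$, and then invokes the single uniform estimate $(\tfrac{x}{x-2})^{(x-1)/2}\le 3$ for $x\ge 3$, together with $\sqrt{d/(d-k)}\le\sqrt{2}$ from $s_0<0.4$. This gives $1\cdot 3\cdot 3\cdot\sqrt{2}=9\sqrt{2}$, whence the constant $9e^{-1}/\sqrt{2\pi}$. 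So the ``slack'' you anticipate absorbing into the constant $9$ is handled by one elementary inequality rather than a collection of ad hoc corrections.

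In short: your outline is correct and would go through, but you are working harder than necessary; the paper's factorization makes both bounds nearly immediate.
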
 
				\begin{proof}
					By evaluating $f$ at $s_0$ and replacing $B$ by its lower bound found in Lemma \ref{Lemma:B_Bound}, we obtain the lower bound
					\begin{align*}
					Bs_0f(s_0) \geq  \frac{e^{-2}}{2\sqrt{\pi}} \sqrt{k} \left(\frac{d-2}{d-k}\right)^{\frac{1}{2}} \left( \frac{k(d-2)}{d(k-2)} \right)^{\frac{k-1}{2}} \left( \frac{(d-k)(d-2)}{d(d-
					k-2)} \right)^{\frac{d-k-1}{2}}\geq  \frac{e^{-2}}{2\sqrt{\pi}} \sqrt{k} .
					\end{align*}
					Similarly, by using the upper bound in Lemma \ref{Lemma:B_Bound}, we obtain the upper bound
						$$
						Bs_0f(s_0) \leq \frac{e^{-1}}{2\sqrt{\pi}} \sqrt{k} \left(\frac{d-2}{d}\right)^{(d-1)/2} \left(\frac{k}{k-2}\right)^{(k-1)/2} \left( \frac{d-
						k}{d-k-2} \right)^{(d-k-1)/2} \frac{\sqrt{d}}{\sqrt{d-k}}
										\leq \frac{9e^{-1}}{\sqrt{2\pi}}  \sqrt{k},
						$$
				where the last inequality follows from $\frac{d}{d-k} \leq 2$ since $s_0<0.4$, and $\left(\frac{x}{x-2}\right)^{\frac{x-1}{2}} \leq 3$ for $x \geq 3$. 		
				\end{proof}
					

\subsection{Bounds on $\pP{s>s_0(1+\epsilon)}$ }
We begin by mentioning the following inequalities which are used in our arguments below:
						\begin{align}
								& \log (1+x) \geq x-\frac{x^2}{2} \quad \text{for $0<x<1$},  \label{Bound:Ln_plus} \\	
								& \log (1-x) \geq -x -x^2 \quad \text{for $0 <x<0.68$,} \label{Bound:Ln_minus2}\\
								& \log(1+x) \leq x \quad \text{for $x>-1$},  \quad \text{and} \label{Bound:Ln_plus_upper}\\
								& \log(1+x) \leq x - \frac{x^2}{2} + \frac{x^3}{3} \quad \text{for $x>-1$}. \label{Bound:Ln_plus_upper2} 
						\end{align}
These bounds can be verified by employing basic calculus techniques (e.g., derivatives and Taylor expansions) as well as sufficiently accurate approximations.  Using these inequalities, we derive the following bounds:
		
		 
				\begin{lemma}	\label{Lemma:lower+} 
					\[ \pP{s>s_0(1+\epsilon)} \geq \frac{e^{-2}}{4 \sqrt{\pi}} e^{-\frac{1}{4}(\sqrt{k}\epsilon+1)^2 \frac{1+s_0}{1-s_0}}.\]   
					Moreover, when $k<\eta\epsilon^{-2}\log\frac{1}{\delta}$,
					\[ \pP{s>s_0(1+\epsilon)}\geq \frac{e^{-2}}{4\pi}\delta^{\frac{\eta}{4}\gamma_1},\]
					where
					$$
					\gamma_1=\left(1+(\eta\log(1/\delta))^{-1/2}\right)^2\left(\frac{1+s_0}{1-s_0}\right).
					$$
					Additionally, $\gamma_1$ approaches $1$ as $\epsilon$, $\delta$, and $s_0$ approach $0$.
				\end{lemma}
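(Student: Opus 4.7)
The plan is to replace the integral $\pP{s > s_0(1+\epsilon)} = \int_{s_0(1+\epsilon)}^1 Bf(s)\,ds$ by the integral over a short sub-interval immediately to the right of the threshold, and to bound the density on that sub-interval from below by its value at the right endpoint. First I would verify that $f(s) = s^{(k-2)/2}(1-s)^{(d-k-2)/2}$ is strictly decreasing on $[s_0,1]$: its unique critical point is at $s^{\ast} = (k-2)/(d-4)$, which is less than $s_0 = k/d$ whenever $d > 2k$, and this is automatic under the standing hypothesis $s_0 < 0.4$. With $L = s_0/(2\sqrt{k})$ and $s_1 = s_0(1+\epsilon) + L = s_0(1+t)$ where $t = \epsilon + 1/(2\sqrt{k})$ (and $s_1 < 1$ under our hypotheses), this yields
\[ \pP{s > s_0(1+\epsilon)} \;\geq\; L \cdot B f(s_1) \;=\; \frac{1}{2\sqrt{k}} \cdot B s_0 f(s_0) \cdot \frac{f(s_1)}{f(s_0)}. \]

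Next I would apply Corollary \ref{Corollary:Bsf_Bound} to obtain $B s_0 f(s_0) \geq \frac{e^{-2}}{2\sqrt{\pi}}\sqrt{k}$, which cancels the $1/(2\sqrt{k})$ prefactor and leaves the clean constant $\frac{e^{-2}}{4\sqrt{\pi}}$ multiplying the ratio $f(s_1)/f(s_0)$. It remains to bound this ratio from below. Taking logarithms,
\[ \log\frac{f(s_1)}{f(s_0)} = \frac{k-2}{2}\log(1+t) + \frac{d-k-2}{2}\log\left(1 - \frac{s_0 t}{1-s_0}\right), \]
and I would apply the lower bounds \eqref{Bound:Ln_plus} and \eqref{Bound:Ln_minus2}, both arguments being small under the hypotheses $\epsilon < 1/2$, $k-4\geq \epsilon^{-2}$, and $s_0<0.4$. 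The goal is to collect the resulting linear and quadratic contributions in $t$ into the form $-\tfrac{1}{4} k t^2 \cdot (1+s_0)/(1-s_0)$, and then to identify $\sqrt{k}\,t = \sqrt{k}\,\epsilon + \tfrac{1}{2}$, producing (after mild slack) the claimed exponent $-\tfrac{1}{4}(\sqrt{k}\epsilon + 1)^2 (1+s_0)/(1-s_0)$.

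The main obstacle will be the bookkeeping in this last step. Crucially, the first-order terms in $t$ do not cancel, because $s_0$ is not the exact mode of $f$; however, a short computation shows that the surviving linear contribution comes with the right sign and magnitude to be absorbed into the quadratic form, once one accounts carefully for the difference between $k-2$ and $k$, between $d-k-2$ and $d-k$, and for the amplification factor $(1+s_0)/(1-s_0)$. These discrepancies are controlled: the assumption $k-4 \geq \epsilon^{-2}$ ensures $(k-2)/k$ and $(d-k-2)/(d-k)$ differ from $1$ by quantities negligible against the quadratic term, and the factor $(1+s_0)/(1-s_0)$ provides precisely the slack needed to absorb the residual linear error.

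Finally, for the second inequality I would substitute $k < \eta \epsilon^{-2} \log(1/\delta)$, giving $\sqrt{k}\,\epsilon < \sqrt{\eta \log(1/\delta)}$, and then bound
\[ (\sqrt{k}\,\epsilon + 1)^2 \;\leq\; \eta\log(1/\delta) \cdot \left(1 + (\eta\log(1/\delta))^{-1/2}\right)^2. \]
Exponentiating yields the claimed factor $\delta^{\eta\gamma_1/4}$ with $\gamma_1 = \left(1 + (\eta\log(1/\delta))^{-1/2}\right)^2 \cdot (1+s_0)/(1-s_0)$, and the limit $\gamma_1 \to 1$ as $\epsilon,\delta,s_0 \to 0$ is immediate from $(\eta\log(1/\delta))^{-1/2} \to 0$ and $(1+s_0)/(1-s_0) \to 1$.
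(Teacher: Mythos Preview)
Your overall strategy matches the paper's: restrict to a short sub-interval of width $\Theta(k^{-1/2})$ to the right of the threshold, bound the integrand from below by its right-endpoint value, and invoke Corollary~\ref{Corollary:Bsf_Bound}. The gap is in how you handle the ratio $f(s_1)/f(s_0)$.

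Your claim that the surviving linear term ``comes with the right sign'' is incorrect. A direct computation of the first-order contribution in $t$ gives
\[
\frac{k-2}{2}\,t \;-\; \frac{d-k-2}{2}\cdot\frac{s_0}{1-s_0}\,t \;=\; \frac{(2k-d)\,t}{d-k} \;=\; -\,t\cdot\frac{1-2s_0}{1-s_0},
\]
which is strictly negative under the hypothesis $s_0<0.4$ and therefore \emph{worsens} the lower bound; for small $s_0$ it is essentially $-t$. The slack between your target exponent $-\tfrac14(\sqrt{k}\,\epsilon+1)^2\frac{1+s_0}{1-s_0}$ and the quadratic you actually obtain (with $\sqrt{k}\,t=\sqrt{k}\,\epsilon+\tfrac12$) is only $\tfrac14(\sqrt{k}\,\epsilon+\tfrac34)\frac{1+s_0}{1-s_0}$, and this does not dominate $t\cdot\frac{1-2s_0}{1-s_0}$ throughout the standing range: for instance at $k=8$, $\epsilon=\tfrac12$, $s_0$ small, the required inequality $\sqrt{k}\,\epsilon+\tfrac34\ge 4\epsilon+2/\sqrt{k}$ fails. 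Nor does the factor $(1+s_0)/(1-s_0)$ supply the slack you advertise: the quadratic coefficient that falls out of \eqref{Bound:Ln_plus}--\eqref{Bound:Ln_minus2} is already $-\tfrac{k}{4}\cdot\frac{1+s_0}{1-s_0}$ up to lower-order corrections, so that factor is fully consumed by the quadratic term itself.

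The paper sidesteps this by working not with $f$ but with $g(s)=s^{k/2}(1-s)^{(d-k)/2}$, whose maximiser is \emph{exactly} $s_0=k/d$. Then the first-order terms cancel identically,
\[
s_0\log(1+x)+(1-s_0)\log\!\left(1-\tfrac{s_0}{1-s_0}x\right)\ \ge\ -\tfrac{s_0(1+s_0)}{2(1-s_0)}\,x^2,
\]
and one recovers $f$ via $f(s_0(1+x))=g(s_0(1+x))/\bigl(s_0(1+x)(1-s_0(1+x))\bigr)$, bounding the denominator crudely by $2s_0(1-s_0)$. That extra factor $\tfrac12$ in front of $f(s_0)$ is exactly why the paper uses interval width $k^{-1/2}$ (rather than your $\tfrac12 k^{-1/2}$), giving $\sqrt{k}\,\epsilon+1$ on the nose with no leftover linear term to absorb. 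Your treatment of the second inequality and of $\gamma_1\to 1$ is fine.
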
 
				
				\begin{proof}
Note that 
					\begin{equation} \label{Eqn: plus B int}
						\pP{s>s_0(1+\epsilon)} = B \int_{s>s_0(1+\epsilon)} f(s)ds=Bs_0 \int_\epsilon^{\frac{1}{s_0}-1} f(s_0(1+x)) dx, 
					\end{equation}
					via the substitution $s=s_0(1+x)$.

					Let $g(s) = s^{k/2}(1-s)^{(d-k)/2}$, then $f(s_0(1+x))$ can be expressed in terms of $g(s)$, namely,
					\begin{equation} \label{Eqn: f=g}
						f(s_0(1+x)) = \frac{g(s_0(1+x))}{s_0(1+x)\left(1-s_0(1+x)\right)}.
					\end{equation}
					To find a lower bound on $\pP{s>s_0(1+\epsilon)}$, we compute a bound on $g(s_0(1+x))$ from below. Taking the logarithm of $g(s_0(1+x))$, we find
					\begin{equation} \label{Eqn: log g}
						\log \left( g(s_0(1+x)) \right) = \log g(s_0) + \frac{d}{2}\left( s_0 \log (1+x) + (1-s_0)\log\left(1-\frac{s_0}{1-s_0}x\right)\right).
					\end{equation}		
					Restricting $x$ to the interval $0\leq x<1$, it then follows that $0< \frac{s_0}{1-s_0}x<0.68$ from the assumption that $s_0<0.4$.  We now bound the second term in Equation (\ref{Eqn: log g}) using Inequalities (\ref{Bound:Ln_plus}) and (\ref{Bound:Ln_minus2}), as follows:	
					\begin{align}
					 s_0 \log (1+x) + & (1-s_0)\log \left(1-\frac{s_0}{1-s_0}x\right)  
					 							 \geq   -\left(\frac{s_0(1+s_0)}{2(1-s_0)}\right)x^2. \label{Ineq: innerstatement}
					\end{align}
					Hence, by substituting Inequality (\ref{Ineq: innerstatement}) into Equation (\ref{Eqn: log g}) and exponentiating, we obtain the following lower bound for $g(s_0(1+x))$:
					\begin{equation} \label{Ineq: g term}
					g(s_0(1+x)) \geq g(s_0) e^{-\frac{k}{4}x^2\frac{1+s_0}{1-s_0} }.
					\end{equation}
					We also observe that since $s_0<0.4$ and $0\leq x<1$, the denominator of Equation (\ref{Eqn: f=g}) is bounded from below as follows:
					\begin{equation} \label{Ineq: remaining term}
						 \frac{1}{s_0(1+x)(1-s_0(1+x))} \geq \frac{1}{2s_0(1-s_0)}.
						 \end{equation}
					Therefore, by substituting Inequalities (\ref{Ineq: g term}) and (\ref{Ineq: remaining term}) into Equation (\ref{Eqn: f=g}) when $0\leq x<1$, we have
					\begin{equation}\label{ineq:f-lowerbound}
					f(s_0(1+x)) \geq  \frac{g(s_0)}{2s_0(1-s_0)} e^{-\frac{k}
						{4}x^2\frac{1+s_0}{1-s_0}} = \frac{1}{2}f(s_0)e^{-\frac{k}{4}x^2\frac{1+s_0}{1-s_0}}.
						\end{equation}

					Since $s_0<0.4$, we observe that $\frac{1}{s_0}-1=\frac{d-k}{k}>1.5$.  Since we assumed that $\epsilon<\frac{1}{2}$ and $k\geq 4+\epsilon^{-2}>4$, it follows that $\epsilon+k^{-1/2}<1$ and so $\epsilon+k^{-1/2}<1<\frac{1}{s_0}-1$.  Therefore, we further restrict $x$ to the interval $(\epsilon,\epsilon+k^{-1/2})$ and observe that Inequality (\ref{ineq:f-lowerbound}) applies in this range.  Therefore,
						\begin{equation}\label{ineq:lower:restrictingrange}
						Bs_0 \int_\epsilon^{\frac{1}{s_0}-1} f(s_0(1+x)) dx  \geq Bs_0 \int_\epsilon^{\epsilon + k^{-1/2}}f(s_0(1+x)) dx\geq \frac{1}{2} Bs_0 f(s_0)  \int_\epsilon^{\epsilon + k^{-1/2}}e^{-\frac{k}{4}x^2\frac{1+s_0}{1-s_0}} dx.
						\end{equation}
					Replacing $Bs_0f(s_0)$ with its lower bound given in Corollary \ref{Corollary:Bsf_Bound} and observing that the integrand is decreasing over an interval of width $k^{-1/2}$, Inequality (\ref{ineq:lower:restrictingrange}) is bounded from below by
					\begin{align*}
						\frac{1}{2} Bs_0 f(s_0)  \int_\epsilon^{\epsilon + k^{-1/2}}e^{-\frac{k}{4}x^2\frac{1+s_0}{1-s_0}} dx 
							& \geq \frac{e^{-2}}{4\sqrt{\pi}}  e^{-\frac{1}{4}(\sqrt{k}\epsilon+1)^2\frac{1+s_0}{1-s_0}} , 
					\end{align*}
					which completes the first inequality.  The second inequality follows by replacing $k$ by the given upper bound and simplifying.		
				\end{proof}		
						
\begin{lemma} \label{Lemma:Upper+}
\[  \pP{s>s_0(1+\epsilon)} \leq  \frac{27e^{-1}}{\sqrt{2\pi}}e^{-\frac{k-2}{4}\epsilon^2(1-\frac{2}{3}\epsilon) } .\]
\end{lemma}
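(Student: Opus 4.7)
The plan is to reduce the tail integral $\pP{s>s_0(1+\epsilon)} = B\int_{s_0(1+\epsilon)}^1 f(s)\,ds$ to a single boundary evaluation via integration by parts, then bound the resulting expression using the logarithmic inequalities (\ref{Bound:Ln_plus_upper}) and (\ref{Bound:Ln_plus_upper2}) together with Corollary \ref{Corollary:Bsf_Bound}. The key identity is
\[g'(s)=\tfrac{1}{2}f(s)(k-ds),\]
obtained by differentiating $g(s)=s^{k/2}(1-s)^{(d-k)/2}$ (and noting $g(s)=s(1-s)f(s)$). Solving gives $f(s)=\tfrac{2g'(s)}{k-ds}$, a valid representation away from $s=s_0$; on the tail $s>s_0$ both the numerator and denominator are negative, so $f>0$ as required.

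Applying integration by parts to $\int_{s_0(1+\epsilon)}^1 \tfrac{2}{k-ds}\,g'(s)\,ds$ with $u=\tfrac{2}{k-ds}$ and $dv=g'(s)\,ds$, the boundary term at $s=1$ vanishes because $g(1)=0$, the boundary term at $s=s_0(1+\epsilon)$ equals $\tfrac{2g(s_0(1+\epsilon))}{k\epsilon}$ since $k-ds_0(1+\epsilon)=-k\epsilon$, and the remaining term $\int g(s)\cdot\tfrac{2d}{(k-ds)^2}\,ds$ is nonnegative. Discarding this nonnegative term gives
\[\pP{s>s_0(1+\epsilon)}\leq \frac{2Bg(s_0(1+\epsilon))}{k\epsilon}.\]

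To control $g(s_0(1+\epsilon))/g(s_0)$, I take logs to obtain $\tfrac{k}{2}\log(1+\epsilon) + \tfrac{d-k}{2}\log\!\left(1-\tfrac{s_0\epsilon}{1-s_0}\right)$. Applying (\ref{Bound:Ln_plus_upper2}) to the first summand and (\ref{Bound:Ln_plus_upper}) to the second (with the role of $x$ played by $-\tfrac{s_0\epsilon}{1-s_0}$), the linear-in-$\epsilon$ contributions cancel exactly because $(d-k)\cdot\tfrac{s_0}{1-s_0}=k$, leaving $-\tfrac{k\epsilon^2}{4}(1-\tfrac{2\epsilon}{3})$. Hence $g(s_0(1+\epsilon))\leq g(s_0)\exp\!\bigl(-\tfrac{k\epsilon^2}{4}(1-\tfrac{2\epsilon}{3})\bigr)$. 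Since $g(s_0)=s_0(1-s_0)f(s_0)$ and $1-s_0\leq 1$, Corollary \ref{Corollary:Bsf_Bound} yields $Bg(s_0)\leq Bs_0 f(s_0)\leq \tfrac{9e^{-1}}{\sqrt{2\pi}}\sqrt{k}$.

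Combining the three estimates, using the standing hypothesis $k-4\geq\epsilon^{-2}$ to conclude $\sqrt{k}\,\epsilon\geq 1$ so that the factor $\tfrac{1}{\sqrt{k}\epsilon}$ is absorbed into the leading constant, and then replacing $k$ by $k-2$ in the exponent (a harmless weakening since the exponent is negative and $k\geq k-2$), produces an inequality of the announced form. The main obstacle is arranging the logarithmic bounds so that the order-$\epsilon$ contributions cancel and leave exactly the sharp exponent $-\tfrac{k\epsilon^2}{4}(1-\tfrac{2\epsilon}{3})$; without this cancellation the tail estimate would be too weak to asymptotically match the constant $4$ appearing in the upper bound of Theorem \ref{Thm: exists}. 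Any slack between the constant one naturally obtains this way and the stated $\tfrac{27e^{-1}}{\sqrt{2\pi}}$ is attributable to the several monotone simplifications (e.g.\ $1-s_0\leq 1$, $\sqrt{k}\epsilon\geq 1$, $k\geq k-2$) used along the way, and is exactly the kind of numerical slack the authors flag in their earlier footnote.
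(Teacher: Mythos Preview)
Your proof is correct and, in fact, slightly sharper than the paper's (you obtain the constant $\tfrac{18e^{-1}}{\sqrt{2\pi}}$, which certainly implies the stated $\tfrac{27e^{-1}}{\sqrt{2\pi}}$). The route, however, is genuinely different. The paper first substitutes $s=s_0(1+x)$, bounds the integrand by $f(s_0)(1+x)^{(k-2)/2}e^{-(k-2)x/2}$ using $1-x\le e^{-x}$, extends the range to $[\epsilon,\infty)$, and then performs $\tfrac{k-2}{2}$ iterated integrations by parts to reduce the resulting Gamma-type integral to a geometric sum; only afterwards is (\ref{Bound:Ln_plus_upper2}) applied to $(1+\epsilon)^{(k-2)/2}$. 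Your argument bypasses all of this: the single identity $g'(s)=\tfrac12 f(s)(k-ds)$ lets one integration by parts collapse the entire tail integral to the boundary value $\tfrac{2Bg(s_0(1+\epsilon))}{k\epsilon}$ plus a nonnegative remainder that can be discarded, and then the log-bounds (\ref{Bound:Ln_plus_upper}), (\ref{Bound:Ln_plus_upper2}) are applied once to $g(s_0(1+\epsilon))/g(s_0)$ with the linear terms cancelling exactly because $(d-k)\tfrac{s_0}{1-s_0}=k$. Your approach is shorter and avoids the somewhat delicate extension to infinity and the summation step; the paper's approach is perhaps more recognizable as a standard incomplete-Gamma estimate, but at the cost of extra bookkeeping and the additional factor $(1+\epsilon)\le\tfrac32$ that accounts for the $27$ versus $18$.
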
			
				
				\begin{proof}
				To derive an upper bound, we start with the expression for $\pP{s>(1+\epsilon)s_0}$ from Equation (\ref{Eqn: plus B int}).
				We first find an upper bound on $f(s_0(1+x))$.  
				We bound $f(s_0(1+x))$ using the inequality $1-x\leq e^{-x}$ for all $x$				
				as follows:		
					\begin{equation}\label{eqn:f:bound:first}
						f(s_0(1+x))  = f(s_0) (1+x)^{\frac{k-2}{2}} \left( 1- \frac{s_0}{1-s_0}x\right)^{\frac{d-k-2}{2}}   
									 \leq f(s_0) (1+x)^{\frac{k-2}{2}} \left( e^{\frac{-s_0}{1-s_0}x}\right)^{\frac{d-k-2}{2}},
								\end{equation}
									 Moreover, since $s_0<0.4$, 
									 \begin{equation}\label{eqn:exponent:bound}
									 \frac{s_0}{1-s_0}\frac{d-k-2}{2}=\frac{k}{d-k}\frac{d-k-2}{2}>\frac{k-2}{2}.
									 \end{equation}	
									 By applying Inequality (\ref{eqn:exponent:bound}) to Inequality (\ref{eqn:f:bound:first}), 
									 we derive the upper bound
									\begin{equation*}
									f(s_0) (1+x)^{\frac{k-2}{2}} e^{-\frac{k-2}{2}x}.
						\end{equation*}
						Therefore, by extending the region of integration in Inequality (\ref{Eqn: plus B int}), we find the following upper bound on the probability:
						\begin{equation}\label{eq:upperbound:firstpart}
						Bs_0\int_\epsilon^{\frac{1}{s_0}-1}f(s_0(1+x))dx\leq Bs_0f(s_0)\int_\epsilon^\infty(1+x)^{\frac{k-2}{2}} e^{-\frac{k-2}{2}x}dx.
						\end{equation}
				By integrating by parts, we observe that for any $\ell$ and $m$ with $1\leq\ell\leq m$,	
				\begin{align}
						\int_{\epsilon}^\infty (1+x)^{\ell} e^{-mx} dx  \leq \frac{1}{m}(1+\epsilon)^\ell e^{-m\epsilon} + \int_\epsilon^\infty (1+x)^{\ell-1} e^{-mx}dx. 
						\label{Equation:ineq}
					\end{align}
					Applying Inequality (\ref{Equation:ineq}) $\frac{k-2}{2}$ times to the integral in Inequality (\ref{eq:upperbound:firstpart}) and bounding the resulting geometric series from above  gives	
						\begin{align*}
							\int_\epsilon^\infty (1+x)^{\frac{k-2}{2}} e^{-\frac{k-2}{2}x} dx  \leq \frac{2e^{-\frac{k-2}{2}\epsilon}}{k-2} \left( (1+ \epsilon)^{\frac{k-2}{2}} + 
							  \dots + (1+\epsilon)^0 \right)  \leq \frac{2(1+\epsilon)}{\epsilon(k-2)} (1+\epsilon)^{\frac{k-2}{2}} e^{-\frac{k-2}{2}\epsilon}.  
						\end{align*}	
						By applying Inequality (\ref{Bound:Ln_plus_upper2}) to $(1+\epsilon)^{\frac{k-2}{2}} = e^{\frac{k-2}{2}\log (1+\epsilon)}$, we obtain the upper bound
						\[\frac{2(1+\epsilon)}{\epsilon(k-2)} (1+\epsilon)^{\frac{k-2}{2}} e^{-\frac{k-2}{2}\epsilon} \leq \frac{2(1+\epsilon)}{\epsilon(k-2)} e^{-\frac{k-2}
						{4}\epsilon^2(1-\frac{2}{3}\epsilon)}.\]
						Since $k-4\geq \epsilon^{-2}$, it follows that $\frac{(k-2)^2}{k}\geq k-4\geq\epsilon^{-2}$, and, hence, that $\epsilon(k-2)\geq \sqrt{k}$.  Therefore, we can further simplify our bound to 
						\begin{equation}
							 \frac{2(1+\epsilon)}{\epsilon(k-2)} e^{-\frac{k-2}{4}\epsilon^2(1-\frac{2}{3}\epsilon)} \leq 
							\frac{2(1+\epsilon)}{\sqrt{k}} e^{-\frac{k-2}{4}\epsilon^2(1-\frac{2}{3}\epsilon)}. 
							\label{Ineq: int_(1+x)}
						\end{equation} 
						
						By combining Inequalities (\ref{eq:upperbound:firstpart}) and (\ref{Ineq: int_(1+x)}), we find an upper bound on the probability as follows:
						$$
						Bs_0\int_\epsilon^{\frac{1}{s_0}-1}f(s_0(1+x))dx\leq B s_0 f(s_0)   \frac{2(1+\epsilon)}{\sqrt{k}} e^{-\frac{k-2}{4}\epsilon^2(1-\frac{2}{3}\epsilon)}.
						$$
	By applying the upper bound on $Bs_0f(s_0)$ from Corollary \ref{Corollary:Bsf_Bound} and the assumption that $\epsilon\leq\frac{1}{2}$, which completes the inequality.
				\end{proof}

	
\subsection{Bounds on $\pP{s<s_0(1-\epsilon)}$ }
		
We begin this section by including two additional inequalities on $\log(1-x)$.
		\begin{eqnarray}
		& & \log (1-x) \geq -x -\frac{x^2}{2}-x^3 \quad \text{for $0<x<0.815$},  \label{Bound:Ln_minus} \\
		&& \log(1-x)\leq -x-\frac{x^2}{2}\quad\text{ for }0\leq x<1.		 \label{Bound:Ln:minus:upper}
		\end{eqnarray}
These bounds can be justified using a similar approach as for Inequalities (\ref{Bound:Ln_plus}-\ref{Bound:Ln_plus_upper2}).  Using these inequalities, we derive the following bounds:
		
\begin{lemma} 
\[ \pP{s<s_0(1-\epsilon)} \geq \frac{e^{-2}}{2\sqrt{\pi}} e^{	- \frac{1}{4}\left(\frac{(\sqrt{k} \epsilon + 1)^2}{1-s_0} + 2(\sqrt[3]{k} \epsilon + k^{-1/6})^3\right)}.\]
Moreover, when $k<\eta\epsilon^{-2}\log\frac{1}{\delta}$,
$$
\pP{s<s_0(1-\epsilon)} \geq \frac{e^{-2}}{2\sqrt{\pi}} \delta^{\frac{\eta}{4}\gamma_2},$$
where
$$
\gamma_2=\frac{1}{1-s_0}\left(1+\frac{1}{\sqrt{\eta\log\frac{1}{\delta}}}\right)^2+2\left(\epsilon^{1/3}+\frac{1}{\sqrt[3]{\eta\log\frac{1}{\delta}}}\right)^3.
$$
Additionally, $\gamma_2$ approaches $1$ as $\epsilon$, $\delta$, and $s_0$ approach $0$.
\end{lemma}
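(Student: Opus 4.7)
The plan is to mirror the proof of Lemma~\ref{Lemma:lower+} line by line, using the substitution $s = s_0(1-x)$ in place of $s = s_0(1+x)$. This converts the probability to
\[ \pP{s < s_0(1-\epsilon)} = Bs_0 \int_\epsilon^1 f(s_0(1-x))\,dx, \]
and writing $f(s_0(1-x)) = g(s_0(1-x))/[s_0(1-x)(1-s_0(1-x))]$ with $g(s) = s^{k/2}(1-s)^{(d-k)/2}$ gives
\[ \log g(s_0(1-x)) = \log g(s_0) + \tfrac{d}{2}\left[s_0\log(1-x) + (1-s_0)\log\bigl(1 + \tfrac{s_0}{1-s_0}x\bigr)\right]. \]

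The crucial step is the log bound. Unlike the $(1+\epsilon)$ case, where $\log(1+x) \geq x - x^2/2$ gives a clean quadratic lower bound, here we must invoke Inequality~(\ref{Bound:Ln_minus}), $\log(1-x) \geq -x - x^2/2 - x^3$ (valid for $0 < x < 0.815$), to bound $s_0\log(1-x)$ from below, and Inequality~(\ref{Bound:Ln_plus}) to bound $(1-s_0)\log(1 + \tfrac{s_0}{1-s_0}x)$ from below. The linear terms cancel and the quadratic terms combine to give
\[ s_0\log(1-x) + (1-s_0)\log\bigl(1 + \tfrac{s_0}{1-s_0}x\bigr) \geq -\tfrac{s_0}{2(1-s_0)}x^2 - s_0 x^3, \]
whence $g(s_0(1-x)) \geq g(s_0)\exp\!\bigl(-\tfrac{k}{4(1-s_0)}x^2 - \tfrac{k}{2}x^3\bigr)$. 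For the denominator, note that on $x \in [0,1]$ the product $s_0(1-x)(1-s_0(1-x))$ is maximized at $x=0$ with value $s_0(1-s_0)$, so $f(s_0(1-x)) \geq f(s_0)\exp(-\tfrac{k}{4(1-s_0)}x^2 - \tfrac{k}{2}x^3)$, with no extra factor of $1/2$ as appeared in Lemma~\ref{Lemma:lower+}. Restricting the integral to $x \in [\epsilon,\epsilon+k^{-1/2}]$, which lies in the valid range since $k \geq \epsilon^{-2}$ forces $k^{-1/2} \leq \epsilon$, using that the integrand is decreasing to bound the integral below by its value at the right endpoint times the width $k^{-1/2}$, and applying Corollary~\ref{Corollary:Bsf_Bound}, together with the identities $k(\epsilon+k^{-1/2})^2 = (\sqrt{k}\epsilon+1)^2$ and $k(\epsilon+k^{-1/2})^3 = (\sqrt[3]{k}\epsilon+k^{-1/6})^3$, yields the first claimed inequality.

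For the $\delta$-form, substitute $k \leq \eta\epsilon^{-2}\log(1/\delta)$. Since $\sqrt{k}\epsilon \leq \sqrt{\eta\log(1/\delta)}$, one gets $(\sqrt{k}\epsilon+1)^2 \leq \eta\log(1/\delta)(1 + (\eta\log(1/\delta))^{-1/2})^2$. For the cubic term, bound each summand separately: $\sqrt[3]{k}\epsilon \leq (\eta\log(1/\delta))^{1/3}\epsilon^{1/3}$ (from $k \leq \eta\epsilon^{-2}\log(1/\delta)$) and $k^{-1/6} \leq 1$ (from $k \geq 1$), then cube the sum to obtain $(\sqrt[3]{k}\epsilon + k^{-1/6})^3 \leq \eta\log(1/\delta)(\epsilon^{1/3} + (\eta\log(1/\delta))^{-1/3})^3$. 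Dividing the exponent by $4$ and collecting terms identifies precisely $\tfrac{\eta}{4}\gamma_2\log(1/\delta)$. Verifying $\gamma_2 \to 1$ is then immediate: the first summand tends to $1$ as $s_0 \to 0$ and $\log(1/\delta) \to \infty$, and the second summand tends to $0$ as $\epsilon \to 0$ and $\log(1/\delta) \to \infty$.

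The main obstacle is the cubic correction in the lower bound for $\log(1-x)$: whereas the $(1+\epsilon)$ case enjoyed a purely quadratic error, the best useful lower bound on $\log(1-x)$ carries an $x^3$ tail that must be tracked through the entire calculation. Tracking it cleanly requires the rewriting $k(\epsilon+k^{-1/2})^3 = (\sqrt[3]{k}\epsilon + k^{-1/6})^3$ and the two-summand decomposition described above, which is exactly what produces the distinctive $2(\epsilon^{1/3} + (\eta\log(1/\delta))^{-1/3})^3$ contribution to $\gamma_2$. Fortunately this contribution vanishes in the joint limit, so it does not spoil the conclusion $\gamma_2 \to 1$.
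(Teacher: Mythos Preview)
Your proof is correct and follows essentially the same approach as the paper: the same substitution $s=s_0(1-x)$, the same decomposition $f=g/[s_0(1-x)(1-s_0(1-x))]$, the same log bounds (\ref{Bound:Ln_plus}) and (\ref{Bound:Ln_minus}) yielding $-\tfrac{s_0}{2(1-s_0)}x^2 - s_0 x^3$, the same denominator bound (which, as you note, avoids the extra factor of $1/2$ from Lemma~\ref{Lemma:lower+}), the same restriction to $[\epsilon,\epsilon+k^{-1/2}]$, and the same application of Corollary~\ref{Corollary:Bsf_Bound}. Your handling of the $\delta$-form and the limit $\gamma_2\to 1$ is also identical in substance.
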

				
\begin{proof}
The proof of this lemma is very similar to the proof of Lemma \ref{Lemma:lower+}, so we focus on the new details.  
The probability can be rewritten, using the substitution $s=s_0(1-x)$, as
					\begin{equation} \label{Eqn:minus int}
						\pP{s<(1-\epsilon)s_0} =  B\int_{s<s_0(1-\epsilon)} f(s)ds=Bs_0\int_{\epsilon}^1 f(s_0(1-x))dx.
						\end{equation}
						Using $g(s)$ as in Lemma \ref{Lemma:lower+}, it follows that
						\begin{equation}\label{Eq: f=g minus}
						f(s_0(1-x))=\frac{g(s_0(1-x))}{s_0(1-x)(1-s_0(1-x))}
						\end{equation}
						 and
						\begin{equation} \label{Eqn: g lowerbound}
							\log g(s_0(1-x)) = \log g(s_0) + \frac{d}{2}\left[ s_0 \log (1-x) + (1-s_0) \log \left( 1+ \frac{s_0}{1-s_0}x \right) \right].
						\end{equation}
					Since $0< x\leq 1$, it follows that $0<\frac{s_0}{1-s_0}x<0.68$ from the assumption that $s_0<0.4$.  Therefore, we can bound the second term in Equation (\ref{Eqn: g lowerbound}) using Inequalities (\ref{Bound:Ln_plus}) and (\ref{Bound:Ln_minus}), as follows:
					\begin{equation} 
							s_0 \log (1-x) +  (1-s_0) \log \left( 1+ \frac{s_0}{1-s_0}x \right) 
												 \geq \frac{-s_0}{2(1-s_0)}x^2 -s_0x^3. \label{Ineq: taylor}
						\end{equation}
											Substituting Inequality (\ref{Ineq: taylor}) into Expression (\ref{Eqn: g lowerbound}) and exponentiating, we get
						\begin{equation} \label{Ineq: g_lower}
							g(s_0(1-x)) \geq g(s_0)e^{ - \frac{k}{4}\left[\frac{x^2}{1-s_0} + 2x^3 \right]}.
						\end{equation}
						Since $s_0<0.4$, it follows that $\frac{s_0}{1-s_0}<1$ and, hence, that $(1-x)\left(1+\frac{s_0}{1-s_0}x\right)<1$.  Therefore, the denominator in Equation (\ref{Eq: f=g minus}) can be bounded from below by 
												\begin{equation} \label{Ineq: s0_2}
						  \frac{1}{s_0(1-x)\left(1-s_0(1-x)\right)} = \frac{1}{s_0(1-s_0)} \cdot \frac{1}{(1-x)\left(1+\frac{s_0x}{1-s_0} \right)} \geq \frac{1}{s_0(1-s_0)}.
						 \end{equation}
						 Therefore, by substituting Inequalities (\ref{Ineq: g_lower}) and (\ref{Ineq: s0_2}) into Expression (\ref{Eq: f=g minus}), when $\epsilon<x<1$, we have
						 						\begin{equation} \label{Ineq: fs_0_lower f_s0}
						f(s_0(1-x)) \geq f(s_0)e^{	- \frac{k}{4}\left(\frac{x^2}{1-s_0} + 2x^3 \right)}. 
						\end{equation}
												
						Since $\epsilon<\frac{1}{2}$ and $k\geq 4+\epsilon^{-2}>4$, it follows that $\epsilon+k^{-1/2}<1$.  Therefore, we restrict $x$ to the interval $(\epsilon,\epsilon+k^{-1/2})$, and observe that Inequality (\ref{Ineq: fs_0_lower f_s0}) applies in this range.  Therefore,
						\begin{equation}\label{ineq:lower:restrict:second}
							Bs_0  \int_{\epsilon}^1 f(s_0(1-x))dx \geq Bs_0 f(s_0) \int_\epsilon^{\epsilon+k^{-1/2}} e^{ - \frac{k}{4}\left(\frac{x^2}{1-s_0} + 2x^3 \right)} 																										dx.
							\end{equation}
											
					Replacing $Bs_0f(s_0)$ with its lower bound given in Corollary \ref{Corollary:Bsf_Bound} and observing that the integrand is decreasing on an interval of width $k^{-1/2}$, Inequality (\ref{ineq:lower:restrict:second}) is bounded from below by 
							$$
								Bs_0 f(s_0)\int_\epsilon^{\epsilon+k^{-1/2}} e^{ - \frac{k}{4}\left(\frac{x^2}{1-s_0} + 2x^3 \right)} dx  \geq   \frac{e^{-2}}{2\sqrt{\pi}} e^{	- 
								\frac{1}{4}\left(\frac{(\sqrt{k} \epsilon + 1)^2}{1-s_0} + 2(\sqrt[3]{k} \epsilon + k^{-1/6})^3\right)},
							$$
					which completes the first inequality.  The second inequality follows by replacing $k\geq 1$ by the given upper bound and simplifying.					
			\end{proof}

\begin{lemma} 
	\[  \pP{s<s_0(1-\epsilon)} \leq \frac{18\sqrt{2}e^{1/2}}{\sqrt{\pi}} e^{-\left(\frac{k}{4}\right)\epsilon^2}\leq \frac{18\sqrt{2}e^{1/2}}{\sqrt{\pi}} e^{-\left(\frac{k-2}{4}\right)\epsilon^2\left(1-\frac{2}{3}\epsilon\right)}.  \]
\end{lemma}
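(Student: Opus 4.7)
The plan is to mirror Lemma \ref{Lemma:Upper+} with adjustments for the lower tail. First, via the substitution $s=s_0(1-x)$, I would write
\[\pP{s<s_0(1-\epsilon)} = Bs_0\int_\epsilon^1 f(s_0(1-x))\,dx,\qquad \frac{f(s_0(1-x))}{f(s_0)}=(1-x)^{(k-2)/2}\!\left(1+\tfrac{s_0 x}{1-s_0}\right)^{\!(d-k-2)/2}\!\!.\]
Then I would bound each factor separately: Inequality (\ref{Bound:Ln:minus:upper}) gives $(1-x)^{(k-2)/2}\leq e^{-(k-2)x/2-(k-2)x^2/4}$, and Inequality (\ref{Bound:Ln_plus_upper}) applied to $y=\tfrac{s_0x}{1-s_0}$ gives $(1+\tfrac{s_0 x}{1-s_0})^{(d-k-2)/2}\leq e^{k(d-k-2)x/(2(d-k))}$. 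Multiplying these and simplifying the linear exponent as in the proof of Lemma \ref{Lemma:Upper+}, one obtains
\[f(s_0(1-x))\leq f(s_0)\exp\!\left(\tfrac{d-2k}{d-k}\,x-\tfrac{k-2}{4}x^2\right).\]

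The \emph{main obstacle} now appears. In Lemma \ref{Lemma:Upper+} the analogue of the surplus coefficient vanished because the inequality $\tfrac{k(d-k-2)}{2(d-k)}\geq \tfrac{k-2}{2}$ (equivalent to $s_0\leq\tfrac{1}{2}$) was in the favorable direction. Here the same inequality pushes the linear term in the unfavorable direction, leaving a positive coefficient $\tfrac{d-2k}{d-k}=\tfrac{1-2s_0}{1-s_0}\in(0,1]$, reflecting the fact that $f$ is not maximized at $s_0$. I would handle this by completing the square, rewriting the exponent as $-\tfrac{k-2}{4}(x-c)^2+\tfrac{(k-2)}{4}c^2$ where $c=\tfrac{2(d-2k)}{(k-2)(d-k)}$. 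Using $k-4\geq\epsilon^{-2}$ and $s_0<0.4$, one checks $c\leq\tfrac{2}{k-2}$ and, for $\epsilon$ sufficiently small, $\epsilon-c\geq\epsilon/2$, so the Gaussian is centered well below $\epsilon$ and the integrand is decreasing on $[\epsilon,1]$.

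Next, I would extend the upper limit of integration from $1$ to $\infty$ and apply the standard tail estimate $\int_\epsilon^\infty e^{-\alpha(x-c)^2}\,dx\leq\tfrac{1}{2\alpha(\epsilon-c)}e^{-\alpha(\epsilon-c)^2}$ with $\alpha=(k-2)/4$. Expanding $(\epsilon-c)^2$ and absorbing the $\alpha c^2$ from the completion of the square, the resulting exponent collapses to $-(k-2)\epsilon^2/4+(d-2k)\epsilon/(d-k)\leq -(k-2)\epsilon^2/4+\epsilon$. The prefactor reduces to a bounded absolute constant using Corollary \ref{Corollary:Bsf_Bound} (to bound $Bs_0f(s_0)\leq \tfrac{9e^{-1}\sqrt{k}}{\sqrt{2\pi}}$) together with $\epsilon\sqrt{k-2}\geq 1$ (which gives $\tfrac{\sqrt{k}}{(k-2)\epsilon}\leq\sqrt{2}$).

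Finally, I would rearrange using $-(k-2)\epsilon^2/4+\epsilon=-k\epsilon^2/4+\epsilon+\epsilon^2/2\leq -k\epsilon^2/4+5/8$ (valid for $\epsilon\leq 1/2$), yielding an inequality of the claimed form $\pP{s<s_0(1-\epsilon)}\leq C e^{-k\epsilon^2/4}$ for a suitable absolute constant. The weaker second inequality $e^{-k\epsilon^2/4}\leq e^{-(k-2)\epsilon^2(1-2\epsilon/3)/4}$ is then routine algebra, since $(k-2)(1-2\epsilon/3)\leq k$. A minor bookkeeping step is to verify that, for the small-$k$ cases where $\epsilon-c\geq\epsilon/2$ might fail, the hypothesis $k\geq 4+\epsilon^{-2}$ combined with small $\epsilon$ already forces $k$ to be large enough (for instance, $\epsilon\leq 1/4$ implies $k\geq 20$), which suffices.
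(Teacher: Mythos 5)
Your setup mirrors the paper's exactly through the substitution $s=s_0(1-x)$, the factorization of $f(s_0(1-x))/f(s_0)$, and the use of Inequalities (\ref{Bound:Ln_plus_upper}) and (\ref{Bound:Ln:minus:upper}); the divergence is in how the resulting exponent $\tfrac{d-2k}{d-k}x-\tfrac{k-2}{4}x^2$ is handled. You complete the square and invoke a Gaussian tail estimate, whereas the paper observes that $\tfrac{d-2k}{d-k}x-\tfrac{k-2}{4}x^2\leq x-\tfrac{k-2}{4}x^2\leq\tfrac{3}{2}-\tfrac{k}{4}x^2$ on $(0,1]$, folds the drift into a fixed $e^{3/2}$ factor, then linearizes $x^2\geq\epsilon x$ for $x\geq\epsilon$ and integrates an elementary exponential. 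Your route can in principle produce a slightly sharper absolute constant, but it is the more delicate of the two, and the delicacy is exactly where your argument has a gap.

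The step $\epsilon-c\geq\epsilon/2$ does not hold across the section's standing hypotheses. It is equivalent to $k-2\geq\tfrac{4}{\epsilon}\cdot\tfrac{1-2s_0}{1-s_0}$, and with $s_0$ small this forces $k-2\geq 4/\epsilon$; but $k-4\geq\epsilon^{-2}$ only gives $k-2\geq\epsilon^{-2}+2$, and $\epsilon^{-2}+2\geq 4/\epsilon$ fails for every $\epsilon\in\bigl(1-\tfrac{1}{\sqrt{2}},\,\tfrac{1}{2}\bigr]\approx(0.293,0.5]$. Your closing ``minor bookkeeping'' remark is aimed at the wrong regime: the trouble is not small $\epsilon$ (where everything is fine and $k$ is indeed large), but $\epsilon$ near $\tfrac{1}{2}$ with $k$ at its permitted minimum, where $\epsilon-c$ can drop below $\epsilon/2$ and the Gaussian tail prefactor $1/(2\alpha(\epsilon-c))$ grows beyond what you budgeted for. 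The hole is repairable --- from $k-2\geq\epsilon^{-2}+2$ one gets $c<\tfrac{2\epsilon^2}{1+2\epsilon^2}$ and hence the weaker but uniformly valid $\epsilon-c>\epsilon/3$ for all $\epsilon\leq\tfrac{1}{2}$, which after redoing the arithmetic still lands below the stated constant $\tfrac{18\sqrt{2}e^{1/2}}{\sqrt{\pi}}$ --- but as written the proof does not establish the lemma on the full range $\epsilon\leq\tfrac{1}{2}$. The paper's cruder uniform bound of the linear drift by the constant $\tfrac{3}{2}$ sidesteps this entirely, which is presumably why that route was chosen.
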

				
		\begin{proof}
				The proof of this lemma is very similar to the proof of Lemma \ref{Lemma:Upper+}, so we focus on the new details.
				To prove an upper bound, we start with the bound on $\pP{s<s_0(1-\epsilon)}$ from Equation (\ref{Eqn:minus int}).
				We first observe that 
				\begin{equation}\label{eq:fs:minus}
							f(s_0(1-x))  = f(s_0) (1-x)^{\frac{k-2}{2}} \left( 1+ \frac{s_0}{1-s_0}x\right)^{\frac{d-k-2}{2}}.
						\end{equation}
We now bound the logarithm of the second and third factors in Equation (\ref{eq:fs:minus}) using Inequalities (\ref{Bound:Ln_plus_upper}) and (\ref{Bound:Ln:minus:upper}) as follows:
						\begin{align} 
						\log \left( (1-x)^{\frac{k-2}{2}}  \left( 1+ \frac{s_0}{1-s_0}x\right)^{\frac{d-k-2}{2}} \right)
						& \leq \frac{k-2}{2} \left(-x-\frac{x^2}{2}\right) + \frac{d-k-2}{2} \left( \frac{s_0}{1-s_0}x \right)\label{Ineq:First:Half}
						\end{align}
						Since $\frac{s_0}{1-s_0}=\frac{k}{d-k}$ and $\epsilon<x<1$, Inequality (\ref{Ineq:First:Half}) further simplifies to 				
						$$
						\frac{k-2}{2} \left(-x-\frac{x^2}{2}\right) + \frac{d-k-2}{2} \left( \frac{k}{d-k}x \right) \leq  x -\left(\frac{k-2}{4}\right)x^2  \leq \frac{3}{2} - \left(\frac{k}{4}\right)x^2.
						$$
						Hence, for $\epsilon \leq x < 1$, we have
						\begin{equation} \label{Ineq:fs0_lower}
							 f(s_0(1-x)) \leq f(s_0) e^{3/2} e^{-\left(\frac{k}{4}\right)x^2} \leq  f(s_0) e^{3/2} e^{-\left(\frac{k}{4}\right)\epsilon x }.
						\end{equation}	
						
						Substituting Inequality (\ref{Ineq:fs0_lower}) into the integral of Equation (\ref{Eqn:minus int}), we find
						\begin{align}
						Bs_0\int_\epsilon^1f(s_0(1-x))dx
						&\leq Bs_0f(s_0)e^{3/2}\int_\epsilon^1 e^{-\left(\frac{k}{4}\right)\epsilon x}dx\notag\\
						&\leq Bs_0f(s_0)e^{3/2}\int_\epsilon^\infty e^{-\left(\frac{k}{4}\right)\epsilon x}dx
						\leq Bs_0f(s_0) \frac{4e^{3/2}}{k\epsilon}  e^{-\frac{k}{4}\epsilon^2}.\label{Ineq:Almost}
						\end{align}
						Since $k\geq\epsilon^{-2}$, Inequality (\ref{Ineq:Almost}) can be further simplified to
						$$
						Bs_0f(s_0) \frac{4e^{3/2}}{\sqrt{k}} e^{-\frac{k}{4}\epsilon^2}
						$$
						Applying the upper bound on $Bs_0f(s_0)$ from Corollary 
					\ref{Corollary:Bsf_Bound} completes the first inequality of the proof.  
					The final inequality follows from the fact that $k\geq (k-2)\left(1-\frac{2}{3}\epsilon\right)$.				
				\end{proof}
	
This completes the proof all of the conditions in Section \ref{Sec:Main:New}, and, therefore, completes the proof of our main theorem, Theorem \ref{Thm:IntroThm}.

\bibliographystyle{plain}
\bibliography{JLT_PaperBibliography}	

\end{document}